\documentclass[aps, notitlepage, onecolumn, pra]{revtex4-1}

\usepackage{graphicx, xcolor}% Include figure files
\usepackage{dcolumn,lipsum}% Align table columns on decimal point
\usepackage{bm}% bold math
\usepackage{amssymb,amsmath,enumerate,epsfig}
\usepackage{pdfpages}

%\usepackage{cite}
%\fontfamily{palatino-ttf}

%\usepackage[pdftex]{graphicx}
%\usepackage[usenames,dvipsnames]{color}
%\usepackage{mathpazo}
\newcommand{\be}{\begin{equation}}
\newcommand{\ee}{\end{equation}}
\newcommand{\ben}{\begin{enumerate}}
\newcommand{\een}{\end{enumerate}}
\def \bea{\begin{align}}
\def \eea{\end{align}}
\newcommand{\half}{\frac{1}{2}}

\newcommand{\ra}{\rangle}

\newcommand{\braket}[2]{\langle #1|#2\rangle}

\newcommand{\ketbra}[2]{|#1\rangle\!\langle#2|}
\newcommand{\ket}[1]{|#1\rangle}

\makeatletter

\newcommand{\Rmnum}[1]{\expandafter\@slowromancap\romannumeral #1@}
\makeatother

\newenvironment{proof}{\noindent \textbf{{Proof~} }}{\qed}

\newtheorem{thm}{Theorem}

\newtheorem{prop}[thm]{Proposition}
\newtheorem{cor}[thm]{Corollary}
\newtheorem{defn}[thm]{Definition}
\newtheorem{examp}[thm]{Example}

\newtheorem{obs}[thm]{Observation}
%\fontfamily{palatino-ttf}

%\def\psim {\cal P}_{k-sim}
%\def\ppar {\cal P}^{\circ t}_{par}(n)={\cal P}(tn,t).

\def\li{\left}
\def\pr{\right}
\providecommand{\Q}{{\cal{Q}}}
\providecommand{\PP}{{\cal{P}}}

\def\ket#1{| #1\rangle}

\def\ra{\rangle}

\def\>{\rangle}
\def\<{\langle}

\def\squareforqed{\hbox{\rlap{$\sqcap$}$\sqcup$}}
\def\qed{\ifmmode\squareforqed\else{\unskip\nobreak\hfil
\penalty50\hskip1em\null\nobreak\hfil\squareforqed
\parfillskip=0pt\finalhyphendemerits=0\endgraf}\fi}

\newcommand{\DAMTP}{Department of Applied Mathematics and Theoretical Physics, University of Cambridge, Cambridge CB3 0WA, U.K.}

\begin{document}

\input epsf
\title{
Optimal amount of entanglement to distinguish quantum states
  instantaneously}
\author{Berry Groisman}
\affiliation{\DAMTP}
\author{Sergii Strelchuk}
\affiliation{\DAMTP}

\begin{abstract}
We introduce a new aspect of nonlocality which arises when the task of quantum states distinguishability is considered under local operations and shared entanglement in the absence of classical communication. We find the optimal amount of entanglement required to accomplish the task perfectly for sets of orthogonal states and argue that it quantifies information nonlocality.

\end{abstract}

\maketitle
\section{introduction}
According to the postulates of quantum mechanics any set of orthogonal states of a {\it single} quantum system in principle can be perfectly (i.e. with probability $1$) distinguished by an appropriate projective measurement. A composite system can be treated effectively as a single system if one can implement a global projection measurement on it.

It is interesting to look at nontrivial scenarios, where the class of possible global projection measurements is restricted. Such scenarios arise naturally in quantum information theory when two or more quantum systems are separated in space, in which case implementation of a global operation on these systems will be restricted by the available resources of communication (quantum and classical). The implications of these restrictions on distinguishability properties of sets of orthogonal joint \footnote{Here {\it joint} does not necessary mean {\it entangled}} states have been used extensively as a tool to study fundamental properties of quantum systems~\cite{chefles_condition_2004, de_rinaldis_distinguishability_2004, walgate_local_2000, chefles_unambiguous_1998}.

Communication resources in quantum information theory can be divided into classical and quantum. The abundance of both types of resources trivialises the task of quantum state distinguishability, since all the systems can be teleported to one location where a required global operation can be implemented, after which the systems are teleported back to their original locations. Restrictions on available resources affect distinguishability and particularly interesting operational settings arise when one type of resource is completely prohibited. It is under these restrictions that the nature of quantumness of the states is revealed. The first case corresponds to local operations, accompanied by classical communication, without access to shared entanglement (LOCC). The second case is a dual to LOCC --- where we are restricted to local operations and shared entanglement (LOSE).

It is natural to expect that the nonlocal nature of composite systems with parts separated in space will manifest itself in their distinguishability properties. In the past it was widely believed that entangled states are the only states that exhibit nonlocal behaviour.
However, it was shown that there are aspects of quantum states which, albeit not being associated with entanglement, are difficult to reconcile with locality. It might not come as a surprise that some sets of orthogonal entangled states cannot be distinguished by LOCC. What does come as a surprise is that certain sets of mutually orthogonal {\it product} (i.e. nonentangled) multipartite states cannot be perfectly distinguished by LOCC. This is contrary to what one would have expected from a set of classical states. Thus,
despite the absence of entanglement in the states themselves there is a certain aspect of nonclassicality (quantumness) to such states.
The fact that their distinguishability cannot be facilitated without utilizing entangled resources prompted researchers to conclude that there is a new type of nonlocality involved, {\it nonlocality without entanglement}~\cite{NLWI}.

The change of paradigm from LOCC to LOSE has profound implications on distinguishability. There exist families of sets which are perfectly distinguishable by LOCC with communicating as little as one classical bit, yet, cannot be perfectly distinguished by LOSE with arbitrary large but finite resources. As LOSE does not involve classical communication, corresponding protocols can be considered as instantaneous. The parties can implement their local interactions --- unitary operations and measurements --- in a very short time: that is, these operations occur in spacelike separated regions. During this process all the (classical) results are recorded locally. The parties still need to combine their local classical records in order to identify the initial state. However, the latter step is not considered to be a part of the measurement process~\cite{aharonov_states_1980}.

Distinguishability under LOCC is well understood~\cite{chefles_condition_2004, de_rinaldis_distinguishability_2004, walgate_local_2000, chefles_unambiguous_1998}. However, little is known about distinguishability under LOSE with limited resources. It was shown that for finite-dimensional systems any multipartite mutually orthogonal set can be discriminated with unlimited resources of entanglement ~\cite{IMNLV_GR, IMNLV_vaidman}. The latter works did not address the question of finding optimal amounts of entanglement required to distinguish particular sets.
The characteristic feature of the above protocols is that they consume all pre-shared entanglement. Recently, an alternative scheme had been proposed ~\cite{clark_entanglement_2010} where a finite number of ebits is consumed {\it on average}. The protocol is designed to halt as soon as all required local operations are completed successfully leaving the rest of the pre-shared entanglement resources intact. It should be noted, however, that with nonzero probability such protocols consume entanglement well beyond their expectation value. As such, their worst-case performance is still very far from the optimal.

In this paper we find the minimum amount of resources which must be {\it available} to the parties beforehand in order to accomplish the task perfectly. This leads us to introducing the new aspect of nonlocality. We show that the algorithm of remote instantaneous rotations introduced in~\cite{IMNLV_GR} requires an optimal amount of entanglement with respect to this aspect.  The method of instantaneous teleportations suggested in \cite{IMNLV_vaidman} will consume the same amount of entanglement.

Several recent works endeavour to construct efficient protocols and to find better bounds on the amount of entanglement required~\cite{clark_entanglement_2010, beigi_simplified_2011}. One important application of the improved bounds for entanglement consumption is attacks for position-based cryptography. They rely on the ability to perform state transformations efficiently. It is currently known that an adversary having access to an exponential amount of entanglement can successfully attack a number of position-based verification protocols. It is currently not known whether one can do better. Our work provides additional evidence that only a polynomial amount of entanglement may be required to perform successful attacks on position-based cryptography protocols~\cite{buhrman_position-based_2014}.

This article is organised as follows. In Section II we formulate the problem in the form of a steering game and define information nonlocality. Section III presents a general steering algorithm. Section IV provides a proof of optimality of that algorithm in terms of entanglement resources. In Section V we discuss generalizations of the steering sets.

\section{Steering game and new measure of information nonlocality}
Consider a bipartite quantum state drawn from the known ensemble $\{\ket{\psi_i},p_i\}_{i=1}^m$  of mutually orthogonal bi-partite states shared by two parties. Their uncertainty about the state is quantified by its entropy.

%Consider a quantum state which is drawn from a known ensemble $\{\ket{\psi_i},p_i\}_{i=1}^m$ of mutually orthogonal bi-partite states. 
In the absence of classical communication, the manner in which the information about the state can be
accessed locally dictates the nonlocality property of the set.
We say that information encoded in the ensemble (set) is {\it bi-localised} if it can be accessed by the parties using local operations (here accessed does not imply interpreted cf.~\cite{aharonov_states_1980})
and {\it de-localised} if it is not accessible by local operations alone. In the latter case we say that the ensemble is associated with {\it information nonlocality}. The aim of the parties in our setup is to bi-localise the nonlocal information encoded in the set of quantum states by means of LOSE. Thus, we will refer to it as
{\it information nonlocality of instantaneous bi-localisation}.
It is ultimately linked with perfect distinguishability, which operationally depends on parameter(s) characterising the states $\{\ket{\psi_i}\}$, but not on $\{p_i\}$. Hence we expect that a quantitative measure of information nonlocality should not depend on the latter. Thus, we consider the task of state discrimination in the presence of de-localised information. The optimal amount of resources required to complete the task is quantified by the number of ebits needed to access, i.e., bi-localise, the de-localised information. Clearly, LOSE is an appropriate framework to analyse the bi-localization of classical information because it prohibits the classical communication between two parties which would otherwise trivialise the task.

We now introduce a {\it steering game} and a new quantity which will serve as a figure of merit for this game. The latter quantifies the information nonlocality in sets of orthogonal quantum states.
To illustrate this task,  consider two parties, Alice and Bob, who share a state from the known set
\be \label{twisted_product}{\cal A}[\alpha]=\{|0\ra_A|0\ra_B,|0\ra_A|1\ra_B,|1\ra_A|+_\alpha\ra_B,|1\ra_A|-_\alpha\ra_B\},
\ee
where $|+_\alpha\ra=\cos\alpha|0\ra+\sin\alpha|1\ra$, $|-_\alpha\ra=\sin\alpha|0\ra-\cos\alpha|1\ra$.
 %and without loss of generality we take $\alpha \in(0,\frac{\pi}{4}]$. 
 When it does not cause confusion, we will omit the subscripts which denote the label of the subsystem. The state is chosen and
prepared by an external party and is unknown to Alice and Bob. By means of LOSE parties try to steer set $\cal A$ to set ${\cal B}=\{|0\ra|0\ra,|0\ra|1\ra,|1\ra|0\ra,|1\ra|1\ra\}$ (with no particular ordering of states) minimizing the amount of entanglement used in the process. We denote the
task of converting the states from set ${\cal A}$ to set ${\cal B}$ with probability 1 as ${\cal A}\rightarrow {\cal B}$.
 For both sets, the overall phase in front of the states will play no role and will further be ignored.

\begin{defn}
{\it We say that the set of two-qubit states ${\cal A}[\alpha]$ contains ${\cal I}({\cal A}[\alpha])= k$ bits of nonlocality of instantaneous bi-localisation, where
${\cal I}({\cal A}[\alpha])$ is the length of the binary expansion of $\alpha \bmod \frac{\pi}{2}$:}

\be
{\cal I}({\cal A}[\alpha]):=|(\bar\alpha)_2|.
\ee
\end{defn}

\noindent It was demonstrated in~\cite{IMNLV_GR} how this game is won using ${\cal I}({\cal A}[\alpha])$ pre-shared ebits. In Section~\ref{optimalitysection} we prove that this number of ebits is indeed optimal. Thus, we treat ${\cal I}$ as a measure, quantifying the optimal amount of resources to steer the set $\cal A$ defined in~\eqref{twisted_product} to the set $\cal B$.\\

\noindent Hence, for finite length of the binary expansion of $\alpha \bmod {\frac{\pi}{2}}$ one can steer perfectly with a finite amount of entanglement~\cite{IMNLV_GR}. However, when the expansion of $\alpha \bmod {\frac{\pi}{2}}$ is infinite, then all known
algorithms use an infinite amount of entanglement required to achieve ${\cal A}\rightarrow {\cal B}$~\footnote{In~\cite{clark_entanglement_2010} authors derive finite expectation value for entanglement consumption, but infinite amount must be available.}.\\

The use of entanglement to accomplish the task is essential, although not immediately clear. One cannot perform ${\cal A}\rightarrow{\cal B}$ using LO with a
weaker resource such as common randomness because in the absence of classical communication it can be seen to represent a pre-agreed strategy. As such, common randomness is
equivalent to Alice and Bob committing to a sequence of LO in advance.\\

Our definition is not sensitive to the probability distribution of the four states (\ref{twisted_product}) because we require the probability of success $p=1$. If they are equiprobable, i.e. $p_i=1/4$ $\forall i$,
then they encode two classical bits of which $1-H_2(\cos^2\alpha)$ are de-localised. This quantity, known as zero-way quantum information deficit~\cite{horodecki_local_2005}, is unsuitable to quantify information nonlocality of the set for two main reasons. First, the above deficit depends on the probability distribution $\{p_i\}$ of the corresponding states, which is
irrelevant as far as distinguishability is concerned. Clearly, physical resources needed to manipulate this set of states cannot depend on that probability
distribution. Second, it is natural to expect that the measure is related to uncertainty about the basis on Bob's side in (\ref{twisted_product}). Thus, our operational understanding of information nonlocality of the set $\{\ket{\psi_i},p_i\}$ is quite different from the previously suggested ways of quantifying nonlocality of the mixtures $\rho_{AB}=\sum p_i \ketbra{\psi_i}{\psi_i}$, where $\ket{\psi_i}\in \cal A$. Our measure also differs from accessible information~\cite{dallarno_informational_2011}, which quantifies mutual information between an ensemble and the outcomes of an optimal measurement, and as such depends on the probability distribution. In our distinguishability scenario accessible information is always maximal, i.e. two bits, while the figure of merit is the amount of entanglement required to carry out a measurement, which achieves this value. A similar comparison can be made with relative entropy of quantumness~\cite{groisman_quantumness_2007} and quantum discord~\cite{ollivier_quantum_2001}.

The parametrization of $\cal A$ can be easily extended to multiple angles:
\begin{equation}\label{set1}\begin{split}
{\cal A}[\alpha_1,...,\alpha_n] =\{&|0\ra|0\ra,|0\ra|1\ra,\\&|1\ra|+_{\alpha_1}\ra,|1\ra|-_{\alpha_1}\ra,\\&|2\ra|+_{\alpha_2}\ra,|2\ra|-_{\alpha_2}\ra,\\
&\vdots\\
&|n\ra|+_{\alpha_n}\ra,|n\ra|-_{\alpha_n}\ra\}
\end{split}
\end{equation}
where in general $\alpha_i\neq\alpha_j$ when $i\neq j$. In this case we define ${\cal I}({\cal A}[\alpha_1,..,\alpha_n]) = \max_{1\le i\le n}|(\bar{\alpha}_i)_2|$. Then, the set $\cal B$ becomes ${\cal B}=\{|i\ra|0\ra, |i\ra|1\ra\}_{i=0}^{n}$. In Sec. \ref{genstsect} we will show how the task can be accomplished using ${\cal I}({\cal A}[\alpha_1,..,\alpha_n])$ pre-shared ebits.\\

The above extension yields the following two properties of ${\cal I}$ (here we assume $|(\bar\alpha_i)|_2<\infty$):

\begin{itemize}
\item
${\cal I}$ is $\it faithful$: ${\cal I} ({\cal A}[\alpha_1,...,\alpha_n])=0$ iff $\forall i \mbox{ }\alpha_i=0 $.

\item ${\cal I}$ is $\it subadditive$ with respect to $\alpha_i$:
\be
{\cal I}({\cal A}[\alpha_1,\alpha_2]) \le {\cal I}({\cal A}[\alpha_1])+{\cal I}({\cal A}[\alpha_2]).
\ee
It trivially holds for $\alpha_1=\alpha_2$.
Considering sets ${\cal A}$ with more nonlocal states corresponding to different values of $\alpha_i$ increases the complexity of the steering task. However, as we show in Sec.  \ref{genstsect} the required number of ebits is $\max_i|(\bar{\alpha}_i)_2|$. This is due to the fact that Alice and Bob may take advantage of the collective local transformation in order to reset the corresponding bit in the binary expansion regardless of the phase. The collective local operation of each of the parties will result in resetting the lowest bit in the binary expansion of each of the phases without the knowledge
about the precise identity of the shared state.
\end{itemize}
\section{General steering algorithm}\label{genstsect}
In this section we present the protocol $\Q^{(k)}\li[\alpha_1,...,\alpha_n\pr]$, which steers the set ${\cal A}\li[\alpha_1,...,\alpha_n\pr]$ to $\cal B$ using $k$ ebits, and show that it is optimal with respect to
$\cal I$. Recall that Alice and Bob are given one of the elements from ${\cal A}\li[\alpha_1,...,\alpha_n\pr]$, and their task is by performing local operations
to turn it into one of the states of the set $\cal B$ using the minimal amount of the shared entanglement. We present a protocol for the steering game and in the subsequent section prove its optimality.

\subsection{A strategy for ${\cal A}\li[\alpha_1,...,\alpha_n\pr]\to {\cal B}$}\label{iteration}
The basic building block of the protocol is a controlled instantaneous probabilistic rotation introduced in \cite{IMNLV_GR}. The parties share $k$ ebits,
$\ket{\Phi^+}_{a_t b_t}$ ($t=1,...,k$), where $k=\max_i {\cal I}\li({\cal A}[\alpha_i]\pr)$ (we assume that $k$ is finite).
The algorithm they execute constitutes of the following pre-agreed sequence of actions, which they perform $k$ times. Each iteration $t$ utilises one entangled
pair $\ket{\Phi^+}_{a_t b_t}$ and results in changing the set from ${\cal A}_{t}\li[\alpha_1^{(t)},...,\alpha_n^{(t)}\pr]$ to ${\cal A}_{t+1}\li[\alpha_1^{(t+1)},...,\alpha_n^{(t+1)}\pr]$.
\begin{enumerate}
\item Bob implements a C-NOT interaction between the qubit $b_t$ and the target qubit $B$:
\begin{equation}
U_{b_t B}^{CNOT}=\ketbra{0}{0}_{b_t}\otimes\mathbb{I}_B+\ketbra{1}{1}_{b_t}\otimes\sigma_y^B,
\end{equation}
where index $t$ denotes the iteration round of the protocol.

\item Bob then measures $\sigma_x$ on particle $b_t$ and records the result, $r_{b_t}=\pm1$.

\item Alice performs a controlled rotation
\begin{equation}\label{c_rot}
U_{A a_t}=\sum_{j=0}^n \ketbra{j}{j}_A\otimes R_{a_t}^x(\alpha_j^{(t)}),
\end{equation}
where $R_{a_t}^x(\alpha_j^{(t)}) \equiv\exp\{i\alpha_j^{(t)}\sigma_x^{a_t}\}$, and $\alpha_0=0$, which results in rotating qubit $B$ by $r_{b_t}\alpha_j^{(t)}$, depending on the state $\ket{j}$ of Alice's qubit $A$.

\item If $r_{b_t}=+1$, then the set is transformed to ${\cal A}^{+}_{t+1}\li[0,...,0\pr]=\cal B$, while the case of  $r_{b_t}=-1$ corresponds to ${\cal
    A}^-_{t+1}\li[2\alpha_1^{(t)},...,2\alpha_n^{(t)}\pr]$. In the former case, Bob ceases his actions (Alice continues with her actions, but they do not affect the
    state of qubit $A$). In the latter case the protocol proceeds to the next iteration step.

\end{enumerate}

\noindent Thus, on each step there is a $50\%$ chance of performing the map ${\cal A}_{t}\to {\cal B}$. The algorithm terminates with certainty at $t=k$,
because ${\cal A}^{\pm}_{k+1}=\cal B$. (It is easy to see that when $n=1$ the protocol above reduces to the protocol described in~\cite{IMNLV_GR}.)

The actions of Alice and Bob take place in two spacelike separated regions, therefore the protocol can be presented as a tensor product of two unitary transformations (see Sec.~\ref{sec:instprot}). In addition, the Appendix provides a very elegant presentation of the protocol in the language of state-operators --- {\it stators} --- introduced in~\cite{reznik_remote_2002}. The latter is a universal resource for remote LOCC and LOSE operations.

\subsection{Binary representation}
It is convenient to represent the set ${\cal A}\li[\alpha_1,...,\alpha_n\pr]$ as a table in which row $i$ contains the binary expansion of $\alpha_i$ (modulo $\pi/2$).
Alice and Bob compute $k$,  and construct a table with
binary expansions of $\alpha_i$ aligned by the binary order. When the expansion for $\alpha_i$ is less than $k$ they pad it with zeros until its length is
$k$. The example for the set  ${\cal A}\li[\frac{7\pi}{16},\frac{5\pi}{16},\frac{3\pi}{16},\frac{1\pi}{4}\pr]$ is shown in Table \ref{T1}.

\begin{table}[h!]
\begin{center}
\begin{tabular}{ccc} \hline\noalign{\smallskip}
$i~$&$\frac{\alpha_i}{\pi}$ &$(\frac{\alpha_i}{\pi}\text{mod} \frac{1}{2})_{_2}$ {\smallskip} \\
\hline\hline\noalign{\smallskip}
1~ &7/16 &0.111\\
\hline\noalign{\smallskip}
2~ &5/16 &0.101 \\
\hline\noalign{\smallskip}
3~ & 3/16 &0.011\\
\hline\noalign{\smallskip}
4~ &1/4 &0.100\\
\hline
\end{tabular}
\end{center}
\caption{Binary representation of $\alpha_i$ for ${\cal A}\li[\frac{7\pi}{16},\frac{5\pi}{16},\frac{3\pi}{16},\frac{1\pi}{4}\pr]$. The values are
calculated modulo $1/2$.}\label{T1}
\end{table}
\noindent The ultimate aim of the protocol is to obtain a table with all entries being zero, which corresponds to $\cal B$. There is a simple global algebraic
operation which nullifies all the nonzero entries. Its effect is equivalent to multiplication by $2^k$.

The advantage of the binary representation is to clarify and simplify the action of the protocol. Each iteration step
results in the entries of the table being multiplied either by $2^{k-t}$ (if $r_{b_t}=+1$) or by $2$ (if $r_{b_t}=-1$). Alice and Bob update their tables
accordingly. Bob updates the table depending on the value of $r_{b_t}$. If $r_{b_t}=+1$ then no further action is required on his side. Alice, who does not have access to
$r_{b_t}$, updates her table as if $r_{b_t}=-1$ and carries on to the next iteration, where she implements (\ref{c_rot}) with new values of $\alpha_i^{(t+1)}$
according to her new table. The algorithm is guaranteed to terminate at $t=k$. The example of the protocol for the set given in Table \ref{T1} is schematically shown in Figure 1.

\begin{figure}
\includegraphics[scale=0.59]{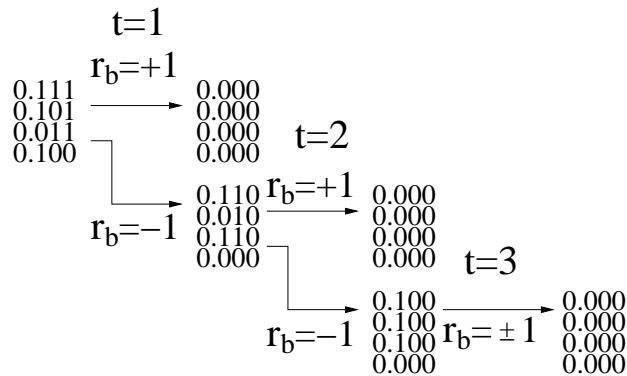}
\epsfxsize=3.2truein
%\centerline{\epsffile{steering1.eps}}
        \caption[]{Schematic representation of the steering protocol for
${\cal A}\li[\frac{7\pi}{16},\frac{5\pi}{16},\frac{3\pi}{16},\frac{1\pi}{4}\pr]$. On each %step the entries %are multiplied by either 
$2^{k-t}$ or $2$ depending on
the value of $r_{b_t}$. The resulting values are then taken modulo $1/2$.}
    \label{steering1} \end{figure}
The result of the application of the above protocol resembles that of~\cite{vaidman_qubits_2004} where authors learned each digit of the binary expansion sequentially.

Our protocol can be viewed as one which bi-localises information bit-by-bit and this may be interpreted as the resetting of delocalised information with the cost of resetting quantified by the number of ebits expended, much in the spirit of Landauer's principle.

\subsection{Approximate steering}
In practical applications, we would not seek perfect discrimination of states, i.e. we would not need the perfect steering, namely turning the set ${\cal A}[\alpha]$ to ${\cal B}$. Instead, we may want to obtain a guarantee that we are close to ${\cal B}$ with respect to some notion of proximity defined below. This is especially relevant if there exists a state with a very large length of binary expansion or in the case when $\alpha /\pi$ is an irrational number --- infinite expansion.
\begin{defn}
The set ${\cal A}[\alpha_1,...,\alpha_n]$ is $\epsilon$-close to $\cal B$ if the error
probability of discriminating any pair of the states from $\cal A$ under local operations (with no entanglement and classical communication) is $p_e \le \epsilon$.
\end{defn}

Consider $n=1$ and each of the states in $\cal A[\beta]$ is equiprobable. In this case, the optimal discrimination of states from ${\cal A}[\beta]$ is achieved when Bob simply measures qubit $B$ in the basis $\{\ket{+_{\beta/2}},\ket{-_{\beta/2}}\}$~\cite{groisman_nonlocal_2001}, with probability of error $p_e=\sin^2 (\beta/2)=\epsilon$. Thus, $\forall$ $\alpha\in (0, \pi/4)$ and $\epsilon\in (0,1)$  there exists $\beta$ such that ${\cal A}[\beta]$ is $\epsilon$-close to $\cal B$ and we can steer ${\cal A}[\alpha]$ to ${\cal A}[\beta]$. It is easy to show that for a nonuniform distribution of the states in ${\cal A}[\alpha]$ the error probability $p_e$ is upper bounded by $\epsilon$.

Alice and Bob perform iterations of the algorithm of Sec.~\ref{iteration} which resets the digits starting from the position which corresponds to the position of the most significant digit in the expansion of $2\epsilon$. Hence, the number of ebits which the parties consume is equal to the length of the binary expansion of the portion of $\alpha$ until the position of the most significant digit in the expansion of $2\epsilon$.

\section{Proof of optimality}\label{optimalitysection}
We now prove our main result --- the optimality of the steering algorithm for set ${\cal A}[\alpha]$. It generalises to multiple sets of angles in a straightforward manner. As a preparatory step we introduce several useful definitions and describe generic properties of protocols.
  
\begin{defn}
Consider a set ${\cal A}[\alpha]$ and $k$ ebits $\ket{\Phi^+}_{a_ib_i}$, $i=1,..,k$, shared by Alice and Bob. Assume that there exists a protocol, $\PP^{(k)}[\alpha]$, that is a sequence of pre-agreed local operations performed by Alice and Bob, that achieves  ${\cal A}[\alpha]\rightarrow\cal B$. Without loss of generality, any such protocol can be represented as a two-step process.

\begin{enumerate}[(1)]
\item A tensor product of local unitary operations, $V_{A\tilde {a}_k}\otimes U_{B\tilde {b}_k}$, acting on qubits $A,B$ and all ancillas $a_i,b_i$, which results in transformation

\be
V_{A\tilde {a}_k}\otimes U_{B\tilde {b}_k}\left({\cal A}[\alpha] \bigotimes_{i=1}^{k}\ket{\Phi^+}_{a_i b_i}\right)={\cal C}^{(k)}[\alpha],
\ee
where ${\cal C}^{(k)}[\alpha]$ is the resulting set of entangled states of all particles and $\tilde {a}_k = a_1\ldots a_k$, $\tilde {b}_k = b_1\ldots b_k$.

\item Projection measurements on all ancillas $a_i$, $b_i$ and the original qubits $A$, $B$ in local computational bases.
\end{enumerate}
Henceforth we will only consider the first step of the protocol.

\end{defn}

\begin{defn}
Two sets of bipartite states ${\cal C}^{(k)}[\alpha]$ and ${\cal C}^{(k)}[\beta]$ are said to be {\it locally equivalent} (LE) if they can be interconverted via local operations only. Without loss of generality any such local operation can be assumed to be unitary.
\end{defn}

It is useful to introduce the following notation for the original set supplemented by $k$ resource ebits:
\be
{\cal A}^{(k)}[\alpha]\equiv{\cal A}[\alpha] \bigotimes_{i=1}^{k}\ket{\Phi^+}_{a_i b_i}.
\ee

\noindent Consider two sets ${\cal A}^{(k)}[\alpha]$ and ${\cal A}^{(k)}[\beta]$ and assume that there exist protocols, $\PP^{(k)}[\alpha]$ and $\PP^{(k)}[\beta]$, which steer ${\cal A}[\alpha]$ and ${\cal A}[\beta]$ to ${\cal C}^{(k)}[\alpha]$ and ${\cal C}^{(k)}[\beta]$ respectively. Sets ${\cal A}^{(k)}[\alpha]$ and ${\cal C}^{(k)}[\alpha]$ (and similarly, ${\cal A}^{(k)}[\beta]$ and ${\cal C}^{(k)}[\beta]$) are locally equivalent by virtue of existence of $\PP^{(k)}[\alpha]$ and $\PP^{(k)}[\beta]$.

\begin{prop}\label{master_set}
Assume there exist two distinct protocols $\PP^{(k)}[\alpha]$ and ${\tilde \PP}^{(k)}[\alpha]$, which steer ${\cal A}^{(k)}[\alpha]$ to ${\cal C}^{(k)}[\alpha]$ and ${\cal{\tilde C}}^{(k)}[\alpha]$, respectively. Then, ${\cal C}^{(k)}[\alpha]$ and ${\cal{\tilde C}}^{(k)}[\alpha]$ are locally equivalent.
\end{prop}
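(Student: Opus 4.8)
The plan is to reduce the claim to the symmetry and transitivity of local equivalence, using the normal form for protocols established in the preceding Definition. First I would recall that, after step~(1), the protocol $\PP^{(k)}[\alpha]$ is nothing but the product unitary $W:=V_{A\tilde a_k}\otimes U_{B\tilde b_k}$, so that ${\cal C}^{(k)}[\alpha]=W\big({\cal A}^{(k)}[\alpha]\big)$, and likewise ${\tilde \PP}^{(k)}[\alpha]$ is a product unitary $\tilde W:={\tilde V}_{A\tilde a_k}\otimes{\tilde U}_{B\tilde b_k}$ with ${\tilde{\cal C}}^{(k)}[\alpha]=\tilde W\big({\cal A}^{(k)}[\alpha]\big)$. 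The key observations are then that (i)~$W^{-1}=V^{\dagger}_{A\tilde a_k}\otimes U^{\dagger}_{B\tilde b_k}$ is again a tensor product of a unitary on Alice's registers $A\tilde a_k$ and one on Bob's registers $B\tilde b_k$, and (ii)~the composition of two such product unitaries is again of the same form. Hence ${\cal A}^{(k)}[\alpha]=W^{-1}\big({\cal C}^{(k)}[\alpha]\big)$ exhibits ${\cal A}^{(k)}[\alpha]$ and ${\cal C}^{(k)}[\alpha]$ as locally equivalent, and the same for ${\cal A}^{(k)}[\alpha]$ and ${\tilde{\cal C}}^{(k)}[\alpha]$.

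Putting these together, $\tilde{\cal C}^{(k)}[\alpha]=\tilde W W^{-1}\big({\cal C}^{(k)}[\alpha]\big)$, where $\tilde W W^{-1}=(\tilde V V^{\dagger})_{A\tilde a_k}\otimes(\tilde U U^{\dagger})_{B\tilde b_k}$. Since $\tilde V V^{\dagger}$ acts only on Alice's registers while $\tilde U U^{\dagger}$ acts only on Bob's, this is a local unitary interconversion of ${\cal C}^{(k)}[\alpha]$ into ${\tilde{\cal C}}^{(k)}[\alpha]$; by the Definition of local equivalence this is exactly what is to be proved. One bookkeeping point deserves a sentence: because both protocols are applied to the \emph{same} labelled family $\big\{\,\ket{\psi_i}\bigotimes_{j=1}^{k}\ket{\Phi^+}_{a_jb_j}\,\big\}_{i}$, the single unitary $\tilde W W^{-1}$ carries the $i$-th member of ${\cal C}^{(k)}[\alpha]$ to the $i$-th member of $\tilde{\cal C}^{(k)}[\alpha]$ for every $i$, so a fortiori it maps the two sets onto one another.

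There is essentially no hard step here: the content of the proposition is that ``reachable from ${\cal A}^{(k)}[\alpha]$ by a product unitary'' is a symmetric and transitive relation, and the genuinely substantive reduction --- that an arbitrary LOSE protocol, possibly containing intermediate local measurements and local classical control, may be brought to the product-unitary normal form of step~(1) by deferring measurements --- has already been carried out in the Definition. The one thing to be careful about is not to conflate a protocol with its output set: $W$ and $\tilde W$ may well be different, and the argument never requires $W=\tilde W$, only that each is a product unitary across the bipartition $(A\tilde a_k\,|\,B\tilde b_k)$. If one wished to emphasise generality, one could add that allowing Alice and Bob additional local pure ancillas changes nothing, since such ancillas are absorbed into $\tilde a_k$ and $\tilde b_k$ and the inverse still factorises across the bipartition.
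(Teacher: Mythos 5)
Your argument is correct and is essentially the paper's own proof: the paper likewise observes that ${\cal C}^{(k)}[\alpha]\rightarrow{\cal A}^{(k)}[\alpha]\rightarrow\tilde{\cal C}^{(k)}[\alpha]$ via $(\PP^{(k)}[\alpha])^{-1}$ followed by $\tilde\PP^{(k)}[\alpha]$, so the two output sets are related by the product unitary $\tilde\PP^{(k)}[\alpha](\PP^{(k)}[\alpha])^{-1}$. Your additional remarks on the factorisation of the inverse and on the member-by-member correspondence are sound elaborations of the same idea.
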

\begin{proof}
Observe that 
\be
{\cal C}^{(k)}[\alpha]\rightarrow_{(\PP^{(k)} [\alpha])^{-1}}{\cal A}^{(k)}[\alpha]\rightarrow_{\tilde{\PP}^{(k)}[\alpha]}{\cal{\tilde C}}^{(k)}[\alpha], 
\ee
i.e. ${\cal C}^{(k)}[\alpha]$ and ${\cal{\tilde C}}^{(k)}[\alpha]$ are related by the unitary 
$\tilde{\PP}^{(k)}[\alpha](\PP^{(k)} [\alpha])^{-1}$.

\end{proof}
Observation \eqref{master_set} will be instrumental for our further analysis as it will allow us to restrict the discussion to a single set.
\begin{obs}
Consider two orthogonal subspaces spanned by the first two states in ${\cal A}[\alpha]$, ${\cal A}_0[\alpha]=\{\ket{00},\ket{01}\}$ and the last two states, ${\cal A}_1[\alpha]=\{\ket{1+_{\alpha}},\ket{1-_{\alpha}}\}$. An optimal protocol $\PP^{(k)}[\alpha] =\PP^{(k)}_1[\alpha]\oplus \PP^{(k)}_2[\alpha]$ can be defined by its action on the respective subspaces, i.e.

\be\begin{cases}\begin{split}
{\cal A}_0^{(k)}[\alpha]&\rightarrow_{\PP^{(k)}_0[\alpha]}{\cal C}_0^{(k)}[\alpha],\\
{\cal A}_1^{(k)}[\alpha]&\rightarrow_{\PP^{(k)}_1[\alpha]}{\cal C}_1^{(k)}[\alpha].
\end{split}\end{cases}\ee
${\cal A}_0[\alpha]$ and ${\cal A}_1[\alpha]$ are locally equivalent. [Indeed, they are related by a transformation $\mathbb{I}_A\otimes R_{B}^y(\alpha)$.]
\end{obs}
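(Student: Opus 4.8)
The statement splits into an easy part --- the local equivalence of $\mathcal{A}_0[\alpha]$ and $\mathcal{A}_1[\alpha]$ --- and a structural part --- that an optimal protocol may be taken block-diagonal with respect to Alice's qubit $A$. The plan is to dispose of the first part by writing down the interconverting local unitary, and to handle the second by exploiting the fact that $A$ is, in every state of $\mathcal{A}[\alpha]$, a computational-basis qubit that Alice can read without disturbance.

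\emph{Local equivalence.} Here $\mathcal{A}_0[\alpha]=\{\ket{0}_A\ket{0}_B,\ket{0}_A\ket{1}_B\}$ and $\mathcal{A}_1[\alpha]=\{\ket{1}_A\ket{+_\alpha}_B,\ket{1}_A\ket{-_\alpha}_B\}$. Fixing the sign/normalization convention for $R^y$ so that $R_B^y(\alpha)\ket{0}_B=\ket{+_\alpha}_B$ and $R_B^y(\alpha)\ket{1}_B=\ket{-_\alpha}_B$ up to an overall phase --- an immediate computation from the definitions of $\ket{\pm_\alpha}$ --- the local unitary $\sigma_x^A\otimes R_B^y(\alpha)$ carries $\mathcal{A}_0[\alpha]$ onto $\mathcal{A}_1[\alpha]$, so the two sets are interconvertible by local operations only. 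Since overall phases play no role and the flag flip $\sigma_x^A$ is a cost-free local unitary, this is precisely the content of the bracketed remark $\mathbb{I}_A\otimes R_B^y(\alpha)$ once the two values of the $A$-flag are identified.

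\emph{Block structure.} Every state of $\mathcal{A}^{(k)}[\alpha]=\mathcal{A}[\alpha]\bigotimes_i\ket{\Phi^+}_{a_ib_i}$ has qubit $A$ in a computational-basis state, so Alice measuring $A$ in that basis is a legal local operation that disturbs no input state, commutes with the resource ebits, and acts trivially on Bob's side. Hence, given any protocol $\PP^{(k)}[\alpha]=V_{A\tilde a_k}\otimes U_{B\tilde b_k}$ in the two-step normal form established above, we may prepend such a measurement without changing the success probability; this splits each run into the sector $A=\ket{0}$, on which the remaining operations act within $\mathcal{A}_0^{(k)}[\alpha]$ and define a sub-protocol $\PP_0^{(k)}[\alpha]$, and the sector $A=\ket{1}$, on which they define $\PP_1^{(k)}[\alpha]$ acting within $\mathcal{A}_1^{(k)}[\alpha]$, with the common factor $U_{B\tilde b_k}$ since it does not touch $A$. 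Conversely, reassembling optimal sub-protocols for the two sub-tasks as $\PP^{(k)}[\alpha]=\PP_0^{(k)}[\alpha]\oplus\PP_1^{(k)}[\alpha]$ yields a single local protocol no worse than the original, with output $\mathcal{C}^{(k)}[\alpha]=\mathcal{C}_0^{(k)}[\alpha]\oplus\mathcal{C}_1^{(k)}[\alpha]$; thus an optimal protocol may be assumed of this form. (If the steering map happens to swap the flag values $\ket{0}_A\leftrightarrow\ket{1}_A$ between $\mathcal{A}$ and $\mathcal{B}$, one simply composes with $\sigma_x^A$, which is again block-respecting.)

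The only point needing care is compatibility with the requirement that the first step of a protocol be a tensor product of local \emph{unitaries}: rather than literally measuring $A$ up front, Alice applies a CNOT from $A$ onto a fresh local ancilla and defers its readout to the measurement step, which leaves the normal form intact by the deferred-measurement principle. I expect this --- together with checking that the recombination $\PP_0^{(k)}\oplus\PP_1^{(k)}$ is a genuine (well-defined, unitary) local operation on the whole space --- to be the main, though entirely routine, obstacle; nothing deeper is involved, which is why the statement is recorded as an observation. The payoff is that, combined with the first part, it lets the optimality argument be carried out on a single block, namely the nontrivial set $\mathcal{A}_1[\alpha]$.
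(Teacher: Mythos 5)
Your write-up is essentially consistent with the paper, which records this as an Observation and offers no argument beyond the bracketed remark; your two parts (the explicit interconverting local unitary, and the reduction of Alice's unitary to block form via a nondisturbing readout of the computational-basis flag $A$, deferred through a CNOT onto an ancilla) are exactly the facts the paper silently relies on later, where it writes the general protocol as $V_{A\tilde a_m}\otimes U_{B\tilde b_m}$ with $V_{A\tilde a_m}=\ketbra{0}{0}_A\otimes V_{0,\tilde a_m}+\ketbra{1}{1}_A\otimes V_{1,\tilde a_m}$. Your observation that the interconverting unitary is really $\sigma_x^A\otimes R^y_B(\alpha)$ rather than $\mathbb{I}_A\otimes R^y_B(\alpha)$ is a fair (and harmless) correction.

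One sentence, however, should not survive as written: ``reassembling \emph{optimal sub-protocols for the two sub-tasks} as $\PP^{(k)}_0\oplus\PP^{(k)}_1$ yields a single local protocol no worse than the original.'' The direct sum is a decomposition of \emph{Alice's} unitary only; Bob cannot learn which block he is in, so $U_{B\tilde b_k}$ is necessarily common to both blocks and the two sub-protocols cannot be optimized independently. Indeed, the sub-task on $\mathcal{A}_0^{(k)}[\alpha]=\{\ket{00},\ket{01}\}\otimes\ket{\Phi^+}^{\otimes k}$ is already solved with zero ebits in isolation, so independent optimization would trivialize the entire problem; the nontrivial content of the later optimality proof is precisely that Bob's single $U_{B\tilde b_k}$ must serve both blocks simultaneously. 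The correct statement of the converse is only that the block-diagonal protocol extracted from a given optimal protocol (with the shared Bob unitary) reproduces its action, which is what the Observation asserts. With that sentence weakened accordingly, the argument is sound.
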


\subsection{The instantaneous rotations protocol}\label{sec:instprot}

Earlier, in Section~\ref{genstsect}, we described the step-by-step protocol $\Q^{(k)}[\alpha]$ based on instantaneous rotations, which steers ${\cal A}^{(k)}[\alpha]$ to ${\cal C}^{(k)}[\alpha]$ for $\alpha=\pi/2^{k+1}$. This protocol can be written as the product of local unitaries of the form
\be\label{instrot}
\Q^{(k)}[\alpha]=\left(\prod_{i=1}^k\Lambda_{Aa_i}(2^{i-1}\alpha)\right)\otimes\left(\left[\prod_{i=2}^k\left((1-P_1^{(i-1)})\otimes \mathbb{I}_{Bb_i}+P_1^{(i-1)}\otimes\Omega_{Bb_i}\right)\right ]\Omega_{Bb_1}\right),
\ee

\noindent where
\begin{equation}\begin{split}\label{unitaryprot}
\Lambda_{Aa}(\theta)&=\ketbra{0}{0}_A\otimes \mathbb{I}_a+\ketbra{1}{1}_A\otimes R_a^x(\theta),\\
\Omega_{Bb}&= \mathbb{I}_B\otimes\ketbra{+}{0}_b+\sigma_y^B\otimes\ketbra{-}{1}_b,\\
P_1^{(j)}&=\ketbra{1_{b_1}1_{b_2}...1_{b_j}}{1_{b_1}1_{b_2}...1_{b_j}},
\end{split}
\end{equation}
with $\ket{\pm}=(\ket{0}\pm\ket{1})/\sqrt{2}$. 
The task can be accomplished with $m>k$ ebits by acting on the remaining $m-k$ ebits by $Z = \mathbb{I}_{a_j}\otimes\Omega_{Bb_j}$. We can compare any protocol that requires $k$ ebits with the one that requires $m>k$ ebits by padding the former with the $m-k$ ebits and acting on them by $Z$. This will result in the same number of terms in respective sets ${\cal C}$.\\

\noindent As a simple example, consider the action of $\Q^{(1)}[\pi/4]$=$\Lambda_{Aa}(\pi/4)\otimes \Omega_{Bb}$, which results in ${\cal C}^{(1)}[\pi/4]$:

\begin{equation}\label{statesplitting}\begin{split}
\ket{00~}_{AB}\ket{\Phi^{+}}_{ab}~ &\longrightarrow_Q ~ \frac{1}{2}\left[\ket{00}_{Aa}\ket{00}_{Bb}+\ket{00}_{Aa}\ket{01}_{Bb}+i\ket{01}_{Aa}\ket{10}_{Bb}-i\ket{01}_{Aa}\ket{11}_{Bb}\right],\\
\ket{01~}_{AB}\ket{\Phi^{+}}_{ab}~ &\longrightarrow_Q ~ \frac{1}{2}\left[\ket{00}_{Aa}\ket{10}_{Bb}+\ket{00}_{Aa}\ket{11}_{Bb}-i\ket{01}_{Aa}\ket{00}_{Bb}+i\ket{01}_{Aa}\ket{01}_{Bb}\right],\\
\ket{1+_{\frac{\pi}{4}}}_{AB}\ket{\Phi^{+}}_{ab}~ &\longrightarrow_Q ~ \frac{1}{2}\left[\ket{10}_{Aa}\ket{00}_{Bb}+\ket{10}_{Aa}\ket{11}_{Bb}+i\ket{11}_{Aa}\ket{10}_{Bb}+i\ket{11}_{Aa}\ket{01}_{Bb}\right],\\
\ket{1-_{\frac{\pi}{4}}}_{AB}\ket{\Phi^{+}}_{ab}~ &\longrightarrow_Q ~ \frac{1}{2}\left[\ket{10}_{Aa}\ket{10}_{Bb}+\ket{10}_{Aa}\ket{01}_{Bb}+i\ket{11}_{Aa}\ket{00}_{Bb}-i\ket{11}_{Aa}\ket{11}_{Bb}\right].
\end{split}\end{equation}
\subsection{ Proof of optimality for $k=1$ and $\alpha={\pi}/{4}$}
Assume that $\PP^{(1)}(\beta)$ with $\beta\neq 0,\frac{\pi}{4}$ exists. It steers ${\cal A}^{(1)}(\beta)$ to the set with elements
\begin{equation}
\widetilde{\cal C}^{(1)}(\beta)=\sum_{j=1}^N a_{ij}\ket{\omega_{ij}}_{Aa}\ket{\xi_{ij}}_{Bb},
\end{equation}
where $i=1,...,4$, $a_{ij}$ are complex amplitudes, and $\omega_{ij}$, $\xi_{ij}$ are binary strings which correspond to particular sets of results obtained by Alice and Bob, respectively. These strings obey the following relations:
\begin{enumerate}[(1)]
\item \label{cond1}
\begin{equation}
\braket{\omega_{1j}}{\omega_{2j}}=\braket{\omega_{3j}}{\omega_{4j}}=1,
\end{equation}
which follow from the fact that initial states of Alice are indistinguishable within the two pairs. An instantaneous operation cannot increase their distinguishability.
\item
On the other hand, the orthogonality of Alice's states across the pairs $1-2$ and $3-4$ must be preserved; hence

\begin{equation}
\braket{\omega_{1j}}{\omega_{3j}}=\braket{\omega_{1j}}{\omega_{4j}}=\braket{\omega_{2j}}{\omega_{3j}}=\braket{\omega_{2j}}{\omega_{4j}}=0.
\end{equation}

\item Condition (\ref{cond1}) implies that on Bob's side the corresponding pairs need to remain orthogonal, i.e.,
\begin{equation}
\braket{\xi_{1j}}{\xi_{2j}}=\braket{\xi_{3j}}{\xi_{4j}}=1,
\end{equation}
\item \label{cond4} We assume that the final result of the nonlocal measurement is a $4$-valued function of all output bits; that is, none of the bits $A,B,a,b$ is redundant. This will be taken
into account when analysing the general structure of possible unitaries.

\end{enumerate}

\noindent In the following derivation, we will make use of the corollary of the nonsignaling principle.
\begin{cor}\label{nsp}
The reduced density matrix on Bob's side, $\rho_{Bb}$, is invariant under local actions on systems $A$ and $a$, and vice versa.
\end{cor}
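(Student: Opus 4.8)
The plan is to reduce the claim to the elementary fact that a trace-preserving operation applied to one half of a bipartite system leaves the reduced state of the other half unchanged, once one has correctly identified what ``a local action by Alice in the absence of classical communication'' is. Let $\rho$ denote the joint state of the four registers $A,\tilde a,B,\tilde b$, where $\tilde a$ (resp. $\tilde b$) collects all of Alice's (resp. Bob's) ancillas; $\rho$ may be taken pure, but this will not be needed. By the two-step structure already established for admissible protocols, Alice's moves are of two kinds: (i) a local unitary $V$ on $A\tilde a$, and (ii) local projective measurements on $A\tilde a$ whose outcomes are \emph{not} communicated to Bob.

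First I would dispatch the unitary case. Here $\rho'_{B\tilde b}=\tr_{A\tilde a}\big[(V\otimes\ID_{B\tilde b})\,\rho\,(V^\dagger\otimes\ID_{B\tilde b})\big]$, and I would use the cyclicity of the partial trace over the traced-out factor, $\tr_{A\tilde a}[(X\otimes\ID_{B\tilde b})M]=\tr_{A\tilde a}[M(X\otimes\ID_{B\tilde b})]$ for any operator $X$ on $A\tilde a$ — immediate by writing $M=\sum_k P_k\otimes Q_k$ and invoking cyclicity of the full trace on the $A\tilde a$ factor — to bring $V^\dagger$ next to $V$ and apply $V^\dagger V=\ID_{A\tilde a}$, obtaining $\rho'_{B\tilde b}=\tr_{A\tilde a}[\rho]=\rho_{B\tilde b}$.

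Next I would treat a local measurement on $A\tilde a$ with Kraus operators $\{\Pi_\mu\}$, $\sum_\mu \Pi_\mu^\dagger\Pi_\mu=\ID_{A\tilde a}$. Since the protocol allows no classical communication, the only post-measurement description available on Bob's side is the non-selective (ensemble-averaged) state $\sum_\mu(\Pi_\mu\otimes\ID_{B\tilde b})\,\rho\,(\Pi_\mu^\dagger\otimes\ID_{B\tilde b})$; tracing out $A\tilde a$ and using the same cyclicity together with $\sum_\mu\Pi_\mu^\dagger\Pi_\mu=\ID_{A\tilde a}$ once more yields $\rho_{B\tilde b}$. Any finite sequence of moves of types (i)--(ii) is again a completely positive trace-preserving map on $A\tilde a$, so composition closes the argument, and the ``vice versa'' statement for Bob's actions and $\rho_{A\tilde a}$ follows by interchanging the two sides.

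The one point I would flag rather than gloss over is \emph{why} the object seen by Bob is the non-selective channel: were Alice permitted to announce her outcome $\mu$, Bob's conditional state would in general depend on $\mu$ and the corollary would be false. It is true precisely because the LOSE setting forbids that announcement, so Bob's marginal is governed by the averaged, hence trace-preserving, channel. Everything else is the routine linear algebra above, so I do not expect a substantive obstacle.
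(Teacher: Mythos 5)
Your proof is correct. The paper does not actually prove this corollary --- it simply asserts it as a consequence of the non-signalling principle and moves straight to an illustration --- and your argument (cyclicity of the partial trace over the discarded factor for unitaries, plus trace preservation of the non-selective measurement channel when outcomes cannot be communicated) is precisely the standard proof of that principle, so it fills in exactly what the paper leaves implicit. Your closing remark correctly isolates the one genuinely physical ingredient, namely that the LOSE setting forces Bob's marginal to be governed by the outcome-averaged channel.
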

As a trivial illustration of the corollary $\ref{nsp}$ let us show that the required map cannot be accomplished for $\alpha=\pi/4$ without ancillary systems. In this case there will be only four output binary strings (up to permutations), i.e.,
\begin{equation}\begin{split}
\ket{00~}_{AB}~ &\longrightarrow ~ a_{11}\ket{0}_A\ket{0}_B,\\
\ket{01~}_{AB}~ &\longrightarrow ~ a_{21}\ket{0}_A\ket{1}_B,\\
\ket{1+}_{AB}~ &\longrightarrow ~ a_{31}\ket{1}_A\ket{0}_B,\\
\ket{1-}_{AB}~ &\longrightarrow ~ a_{41}\ket{1}_A\ket{1}_B,
\end{split}\end{equation}
where $|a_{ij}|=1$ due to normalization.
Now, assume that the initial state is $\ket{00}_{AB}$. After the protocol is executed the output state remains unchanged. The matrix element on Bob's side $\rho_{B}^{(22)}=0$. If, however, Alice flips her spin shortly before the protocol is executed, the state becomes $\ket{10}=1/\sqrt{2}(\ket{1+}+\ket{1-})$ and the output state is $\ket{1}_A\ket{+}_B$, with $\rho_{B}^{(22)}=1/4$. Hence, such an operation is forbidden~\footnote{We could arrive to the same conclusion based on the fact that the corresponding transformation on  Bob's side  cannot be performed by unitary on $B$. }.\\
\subsection{One ebit is necessary and sufficient for  $\alpha = {\pi}/{4}$}
Let us show that an ebit is necessary and sufficient. In this case, there are 16 output binary strings $\ket{\omega_{ij}}_{Aa}\ket{\xi_{ij}}_{Bb}$.
Taking into account the conditions (\ref{cond1})-(\ref{cond4}) the action of $V_{Aa}\otimes U_{Bb}$ can be represented (up to trivial permutations) as

\begin{equation}\label{statesplitting}\begin{split}
\ket{00~}_{AB}\ket{\Phi}_{ab}~ &\longrightarrow ~ a_{11}\ket{00}_{Aa}\ket{00}_{Bb}+a_{12}\ket{00}_{Aa}\ket{01}_{Bb}+a_{13}\ket{01}_{Aa}\ket{10}_{Bb}+a_{14}\ket{01}_{Aa}\ket{11}_{Bb}\\
\ket{01~}_{AB}\ket{\Phi}_{ab}~ &\longrightarrow ~ a_{21}\ket{00}_{Aa}\ket{10}_{Bb}+a_{22}\ket{00}_{Aa}\ket{11}_{Bb}+a_{23}\ket{01}_{Aa}\ket{00}_{Bb}+a_{24}\ket{01}_{Aa}\ket{01}_{Bb}\\
\ket{1+}_{AB}\ket{\Phi}_{ab}~ &\longrightarrow ~ a_{31}\ket{10}_{Aa}\ket{00}_{Bb}+a_{32}\ket{10}_{Aa}\ket{11}_{Bb}+a_{33}\ket{11}_{Aa}\ket{10}_{Bb}+a_{34}\ket{11}_{Aa}\ket{01}_{Bb}\\
\ket{1-}_{AB}\ket{\Phi}_{ab}~ &\longrightarrow ~ a_{41}\ket{10}_{Aa}\ket{10}_{Bb}+a_{42}\ket{10}_{Aa}\ket{01}_{Bb}+a_{43}\ket{11}_{Aa}\ket{00}_{Bb}+a_{44}\ket{11}_{Aa}\ket{11}_{Bb}.
\end{split}\end{equation}

\begin{prop}\label{prop1}
$|a_{ij}|=1/4$ $\forall ~i,j$.
\end{prop}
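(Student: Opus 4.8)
The plan is to extract the amplitude constraints from three physical/structural facts: (i) normalisation of each of the four output states, (ii) the non-signalling Corollary~\ref{nsp}, applied to both Alice's and Bob's reduced density matrices, and (iii) the fact that, after tracing out the ancillas, the four output states must be distinguishable by local measurements (they are to be steered to $\cal B$, whose reduced states on $B$ are $\ket{0},\ket{1},\ket{0},\ket{1}$, perfectly distinguishable on each side). Normalisation immediately gives $\sum_j |a_{ij}|^2 = 1$ for each $i$, so it suffices to show all four moduli in a given row are equal; combined with the symmetry between rows this will force $|a_{ij}|=1/4$ for all $i,j$.

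First I would compute $\rho_B$ (equivalently $\rho_{Bb}$ traced down to $B$) for the input $\ket{00}_{AB}$ and for the input obtained by flipping Alice's qubit, i.e. $\ket{10}=\tfrac1{\sqrt2}(\ket{1+}+\ket{1-})$, exactly as in the warm-up example just above the proposition. By Corollary~\ref{nsp} the reduced state on $Bb$ — and hence on $B$ — cannot change when Alice acts locally before the protocol, so $\rho_B$ computed from row~1 must equal $\rho_B$ computed from the equal superposition of rows~3 and~4. Reading off the diagonal entries of $\rho_B$ in the $\ket{0},\ket{1}$ basis from the explicit form \eqref{statesplitting}: row~1 gives $\rho_B^{(00)} = |a_{11}|^2+|a_{13}|^2$ (the terms with $\ket{0}_B$) and $\rho_B^{(11)} = |a_{12}|^2+|a_{14}|^2$, while the $\ket{1\pm}$ rows contribute their own diagonal weights, and the cross terms between rows~3 and~4 on the $Bb$ systems must vanish for the equality to be basis-independent. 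Doing the same with Bob flipping his qubit (mapping row~1 to row~2 data, or $\ket{01}$ versus a superposition) gives a second family of equalities. Iterating this with all single-qubit flips on $A$ and on $B$, together with the orthogonality conditions (1)--(3) that pin down which binary strings $\ket{\omega_{ij}},\ket{\xi_{ij}}$ can coincide, should produce enough linear relations among the $|a_{ij}|^2$ to collapse them all to a common value, which normalisation then fixes at $1/4$.

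The cleanest route is probably to exploit the local equivalence of ${\cal A}_0$ and ${\cal A}_1$ (the Observation: $\mathbb{I}_A\otimes R_B^y(\alpha)$ maps one to the other) so that, by Proposition~\ref{master_set}, rows~1--2 and rows~3--4 are related by a fixed local unitary; this immediately transfers any modulus relation proved for the $00/01$ block to the $1+/1-$ block, halving the work. Within the $00/01$ block, I would use non-signalling on Bob's side against a general real rotation $R_A^y(\phi)$ applied by Alice (not just a bit flip): $\rho_{Bb}$ must be independent of $\phi$. Expanding $\rho_{Bb}(\phi)$ as a quadratic form in $\cos\phi,\sin\phi$ whose coefficients are built from the $a_{ij}$ and the overlaps $\braket{\omega_{ij}}{\omega_{i'j'}}$, and demanding the $\phi$-dependence vanish, forces both the off-diagonal (interference) blocks to cancel and the diagonal weights from the two rows to match term-by-term, giving $|a_{1j}| = |a_{2j}|$ for each $j$ and also equating the $j$-sums appropriately. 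A symmetric argument with Bob rotating and non-signalling on Alice's side yields $|a_{i1}|=|a_{i2}|=|a_{i3}|=|a_{i4}|$ within each row. Together with $\sum_j |a_{ij}|^2=1$ this gives the claim.

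I expect the main obstacle to be bookkeeping rather than conceptual: one has to be careful that the ``up to trivial permutations'' in \eqref{statesplitting} has already been used to fix a canonical labelling of the strings $\ket{\omega_{ij}}_{Aa},\ket{\xi_{ij}}_{Bb}$, so that conditions (1)--(4) genuinely constrain the pattern of coincidences (e.g. $\ket{\omega_{1j}}=\ket{\omega_{2j}}$, $\ket{\omega_{3j}}=\ket{\omega_{4j}}$, and these two pairs mutually orthogonal), and then to verify that no additional undetermined phases sneak into the non-signalling equalities — only moduli and a few relative phases matter, and the moduli relations are what we need. A secondary subtlety is ensuring that the rotation $R_A^y(\phi)$ by Alice really leaves the steered set inside the image of \emph{some} valid protocol, so that the computed $\rho_{Bb}$ is the legitimate one; but since we only invoke the invariance of the reduced state (Corollary~\ref{nsp}), which holds for the unitary $V_{Aa}\otimes U_{Bb}$ regardless of the input, this is automatic. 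If the direct quadratic-form argument proves unwieldy, a fallback is to note that $\rho_B$ must be independent of \emph{which} of the two states in a given pair was the input (they are locally indistinguishable), giving $|a_{11}|^2+|a_{13}|^2 = $ (corresponding sum from row~2), and combine this across all four flip-induced pairings to get a small linear system whose only nonnegative solution summing to~1 in each row is the uniform one.
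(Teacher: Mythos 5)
Your proposal is correct and follows essentially the same route as the paper: apply the non-signalling Corollary to the reduced state $\rho_{Bb}$ under a pre-protocol flip by Alice (and symmetrically by Bob), equate the diagonal entries to get a small linear system in the $|a_{ij}|^2$ that forces each quadruple of moduli to coincide, and fix the common value by normalisation. The extra machinery you sketch (general rotations $R_A^y(\phi)$, local equivalence of the two blocks) is unnecessary --- the bit-flip version you offer as a fallback is exactly the paper's argument.
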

\begin{proof}
The proof rests on a straightforward application of Corollary \ref{nsp}. Let $a_{ij}=|a_{ij}|e^{i\theta_{ij}}$.
 Assume that Alice and Bob are given $\ket{00}_{AB}$. After they implement the protocol the reduced density matrix on Bob's side is
\begin{equation}\label{redBob}
\rho_{Bb}=\left(\begin{array}{cccc}|a_{11}|^2&|a_{11}||a_{12}^{}|e^{i(\theta_{11}-\theta_{12})}&0&0\\
|a_{11}^{}| |a_{12}|e^{-i(\theta_{11}-\theta_{12})}&|a_{12}|^2&0&0\\
0&0&|a_{13}|^2&|a_{14}^{}|| a_{13}^{}|e^{i(\theta_{13}-\theta_{14})}
\\0&0&|a_{14}| |a_{13}^{}|e^{-i(\theta_{13}-\theta_{14})}&|a_{14}|^2\end{array}\right).
\end{equation}
However, if Alice flips her qubit shortly before the protocol is implemented, then the initial state is
\begin{equation}
\ket{10}_{AB}=\ket{1}_A(\cos\alpha\ket{+_\alpha}_B+\sin\alpha\ket{-_\alpha}_B).
\end{equation}
For simplicity let us consider the element in the first row and the first column of the resulting density matrix $\rho '_{Bb}$ on Bob's side:
\begin{equation}
\rho_{Bb}^{'(11)}=\cos^2\alpha |a_{31}|^2+\sin^2\alpha |a_{43}|^2.
\end{equation}

\noindent By Corollary \eqref{nsp} it has to match the corresponding element in \eqref{redBob}, i.e. $\rho_{Bb}^{(11)} = |a_{11}|^2$. The same argument can be applied to three other starting states of the set, which gives rise to the following set of simultaneous equations:

\begin{equation}\begin{cases}
\begin{split}
|a_{11}|^2&=\cos^2\alpha|a_{31}|^2+\sin^2\alpha|a_{43}|^2,\\
|a_{23}|^2&=\sin^2\alpha|a_{31}|^2+\cos^2\alpha|a_{43}|^2,\\
|a_{31}|^2&=\cos^2\alpha|a_{31}|^2+\sin^2\alpha|a_{43}|^2,\\
|a_{43}|^2&=\sin^2\alpha|a_{31}|^2+\cos^2\alpha|a_{43}|^2,\\
\end{split}\end{cases}
\end{equation}
which yields

\begin{equation}
|a_{11}|=|a_{23}|=|a_{31}|=|a_{43}|.
\end{equation}

\noindent Similarly, one can show equality of the remaining four quadruples. Also, notice that $|a_{11}|^2=|a_{12}|^2=|a_{13}|^2=|a_{14}|^2$ and the final result follows.
\end{proof}

Now, we are ready to prove our main result of this section:
\begin{thm}\label{onequbitthm}
For $\alpha=\pi/4$ the necessary and sufficient resource is a maximally entangled state of $a$ and $b$, where $a$, $b$ are both qubits.
\end{thm}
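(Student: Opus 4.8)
The plan is to prove the two directions separately, assembling ingredients already in place: the canonical form of an arbitrary one-ebit protocol obtained above from conditions~(\ref{cond1})--(\ref{cond4}), namely \eqref{statesplitting}; Proposition~\ref{prop1}; Corollary~\ref{nsp}; and the no-ancilla obstruction established just before Proposition~\ref{prop1}. \emph{Sufficiency} is already witnessed by the instantaneous-rotations protocol $\Q^{(1)}[\pi/4]=\Lambda_{Aa}(\pi/4)\otimes\Omega_{Bb}$ of Section~\ref{sec:instprot}: its explicit action computed there sends the four elements of ${\cal A}[\pi/4]$ to four states of ${\cal C}^{(1)}[\pi/4]$. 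I would simply verify that these four states have pairwise disjoint support in the local computational bases of $Aa$ and $Bb$; then the final projective measurement of step~(2) of the protocol definition produces an outcome that is a $4$-valued function of $(A,a,B,b)$ identifying the input, so a single Bell pair realises ${\cal A}[\pi/4]\to{\cal B}$ perfectly.

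\emph{Necessity.} That no shared resource at all suffices is exactly the nonsignaling illustration given before Proposition~\ref{prop1} (flipping Alice's qubit would have to make $\rho_B^{(22)}$ jump from $0$ to $1/4$). It remains to exclude a partially entangled two-qubit resource. Suppose $a,b$ are qubits and a protocol $V_{Aa}\otimes U_{Bb}$ accomplishes the task with resource $\ket{\chi}_{ab}=\sqrt{\lambda}\,\ket{0}_a\ket{0}_b+\sqrt{1-\lambda}\,\ket{1}_a\ket{1}_b$ written in Schmidt form. By conditions~(\ref{cond1})--(\ref{cond4}) the map has the form \eqref{statesplitting}, and by Proposition~\ref{prop1} all moduli $|a_{ij}|$ are equal. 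Regrouping the four terms of each output state by Alice's string, every output state is then already a Schmidt decomposition between $Aa$ and $Bb$ with two equal Schmidt weights, so the reduced state $\rho_{Bb}$ of every output has spectrum $\{\tfrac{1}{2},\tfrac{1}{2},0,0\}$, regardless of the phases $\theta_{ij}$. On the other hand, for the input $\ket{0}_A\ket{0}_B\ket{\chi}_{ab}$ the reduced state on $Bb$ before any operation is $\ketbra{0}{0}_B\otimes\big(\lambda\ketbra{0}{0}_b+(1-\lambda)\ketbra{1}{1}_b\big)$, with spectrum $\{\lambda,1-\lambda,0,0\}$; this is preserved by Bob's local unitary $U_{Bb}$ and, by Corollary~\ref{nsp}, is unaffected by Alice's $V_{Aa}$. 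Matching the two spectra forces $\lambda=\tfrac{1}{2}$: the resource must be a maximally entangled two-qubit state. Together with sufficiency this proves the theorem.

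\emph{Expected main obstacle.} Since the canonical form \eqref{statesplitting} and Proposition~\ref{prop1} carry most of the weight, within the stated two-qubit setting the only genuinely new step is the spectral bookkeeping above, whose one delicate point is checking that equal amplitude-moduli really force a flat two-dimensional spectrum for $\rho_{Bb}$ (and symmetrically for $\rho_{Aa}$) no matter the relative phases --- this is exactly where the disjoint-support block structure of \eqref{statesplitting} is used. The part I expect to be genuinely hard, and which lies outside the two-qubit claim proved here, is ruling out resources of \emph{arbitrary} local dimension carrying strictly less than one ebit; one expects this on entanglement-monotonicity grounds, but turning it into a proof is precisely the business of the general optimality argument.
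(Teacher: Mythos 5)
Your proof is correct and takes essentially the same route as the paper: sufficiency via the explicit protocol $\Q^{(1)}[\pi/4]$, and necessity by combining the canonical form \eqref{statesplitting} with Proposition~\ref{prop1} and the fact that the local unitaries $V_{Aa}\otimes U_{Bb}$ cannot change the entanglement across the $Aa|Bb$ cut. You phrase the invariant as the spectrum of $\rho_{Bb}$ (forcing $\lambda=1/2$ in the Schmidt form of the resource), while the paper observes that each output state ``contains one ebit''; these are the same observation for pure states.
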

\begin{proof}
Sufficient condition follows from existence of explicit protocols, which achieve that goal. For the necessary condition observe that each of the output state contains
one ebit. Indeed, by Proposition~\ref{prop1} for $i=1$ the state can be written as

\begin{equation}
\ket{0}_A\otimes \frac{1}{\sqrt{2}}\left( \ket{0}_a \frac{e^{i\theta_{11}}\ket{00}_{Bb}+e^{i\theta_{12}}\ket{01}_{Bb}}{\sqrt{2}}+\ket{1}_a \frac{e^{i\theta_{13}}\ket{10}_{Bb}+e^{i\theta_{14}}\ket{11}_{Bb}}{\sqrt{2}}\right),
\end{equation}
which clearly contains one ebit. As local unitary operations $V_{Aa}$, $U_{Bb}$ cannot change entanglement of the state we conclude that $\ket{\Phi}_{ab}$ must contain an ebit.
\end{proof}
\subsection{Two qubits are necessary and sufficient for $\alpha = \pi/8$}
Similarly to the proof of Theorem~\ref{onequbitthm} two qubits are sufficient because there exists an explicit protocol ${\cal Q}^{(2)}[\pi/8]$. It remains to prove the necessary condition.
\begin{prop}\label{prop2}
For $\alpha\neq \pi/4$ one ebit is not sufficient.
\end{prop}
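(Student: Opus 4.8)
Write $\bar\alpha:=\alpha\bmod\tfrac{\pi}{2}$; the case at issue is $\bar\alpha\notin\{0,\tfrac{\pi}{4}\}$ (if $\bar\alpha=0$ then ${\cal A}[\alpha]={\cal B}$, and for $\alpha=\tfrac{\pi}{8}$ the condition is automatic). The plan is to assume a one‑ebit protocol $\PP^{(1)}[\alpha]$ for ${\cal A}[\alpha]\to{\cal B}$ exists and reach a contradiction. By Proposition~\ref{prop1} and the normal form established just before it (derived exactly for $\bar\alpha\neq0,\tfrac{\pi}{4}$), the first step of $\PP^{(1)}[\alpha]$ is a unitary $V_{Aa}\otimes U_{Bb}$ on the qubits $A,B,a,b$ acting on the four inputs in the equal‑weight $16$‑term form; since perfect discrimination forces the four outputs to have pairwise disjoint computational supports and $b$ is not redundant (cf.\ (\ref{cond4})), up to the trivial relabellings these supports are $A'{=}0,\,B'{=}a'$ (state $1$), $A'{=}0,\,B'{=}\bar a'$ (state $2$), $A'{=}1,\,B'{=}a'{\oplus}b'$ (state $3$) and $A'{=}1,\,B'{=}a'{\oplus}b'{\oplus}1$ (state $4$). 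Because states $1,2$ occupy the $A'{=}0$ sector, states $3,4$ the $A'{=}1$ sector, and $U_{Bb}$ preserves orthonormality, $V_{Aa}$ must be block‑diagonal in $A$, $V_{Aa}=\proj{0}_A\otimes V_0+\proj{1}_A\otimes V_1$; absorbing the local unitary $(V_0)_a$ into the resource state (hence into $U_{Bb}$) I would take $V_{Aa}=\proj{0}_A\otimes\mathbb{I}_a+\proj{1}_A\otimes\Theta_a$ with $\Theta=(\theta_{ij})$ a single‑qubit unitary.

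Next I would condition on Alice's computational measurement of $A,a$ (legitimate, with Bob's residual state well defined, by Corollary~\ref{nsp}). A short computation gives that for outcome $(A',a')$ Bob holds $U(\ket{x}_B\otimes\ket{a'}_b)$ for the two $A{=}0$ inputs ($x=0,1$) and $U(\ket{\pm_\alpha}_B\otimes\ket{g_{a'}}_b)$ for the two $A{=}1$ inputs, where $\ket{g_{a'}}=\sum_z\theta_{a'z}\ket{z}$ is the $a'$‑th row of $\Theta$. Imposing the state‑$1$/state‑$2$ support patterns should force $U_{Bb}$ into the form $U\ket{B}_B\ket{b}_b=\ket{B\oplus b}_{B'}\otimes(\Xi^{(B\oplus b)}\ket{b})_{b'}$, a CNOT‑type interaction between $b$ and $B$ followed by a $B'$‑controlled single‑qubit unitary $\Xi$ on $b'$. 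Substituting this into the state‑$3$/state‑$4$ support patterns, in each branch $a'\in\{0,1\}$, yields pairs of vectors in $\mathbb{C}^2$ that must be proportional; the vanishing of the corresponding $2\times2$ determinants should reduce everything to
\be
\cos^2\!\alpha\;\theta_{00}\theta_{11}+\sin^2\!\alpha\;\theta_{01}\theta_{10}=0,
\qquad
\sin^2\!\alpha\;\theta_{00}\theta_{11}+\cos^2\!\alpha\;\theta_{01}\theta_{10}=0 .
\ee

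Subtracting the two identities gives $(\cos^2\!\alpha-\sin^2\!\alpha)\det\Theta=0$; since $\Theta$ is unitary, $\abs{\det\Theta}=1\neq0$, so $\cos^2\!\alpha=\sin^2\!\alpha$, i.e.\ $\bar\alpha=\tfrac{\pi}{4}$ --- the desired contradiction. (The standing assumption $\bar\alpha\neq0$, equivalently $\sin\alpha\cos\alpha\neq0$, is exactly what keeps those $\mathbb{C}^2$‑vectors nonzero, so that ``proportional'' is a genuine constraint; the subcases with some $\theta_{ij}{=}0$ need no separate treatment, being subsumed under $\det\Theta\neq0$.) Hence one ebit does not suffice whenever $\bar\alpha\notin\{0,\tfrac{\pi}{4}\}$; in particular $\alpha=\tfrac{\pi}{8}$ requires strictly more than one ebit, which together with the explicit ${\cal Q}^{(2)}[\pi/8]$ establishes that two qubits are necessary and sufficient for $\alpha=\tfrac{\pi}{8}$.

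I expect the first paragraph to be the real work: pinning the output rigorously to the single normal form with that support pattern --- ruling out the remaining relabellings and using non‑redundancy of $b$ --- and deducing the $A$‑controlled structure of $V_{Aa}$; once this is done the rest is the elementary linear algebra above. Conceptually the computation makes precise that a single ebit can at best convert ${\cal A}[\alpha]$ into a set locally equivalent to ${\cal A}[2\alpha]$, so that, combined with the sector‑wise reduction of the Observation after Proposition~\ref{master_set}, iterating the argument gives the full lower bound ${\cal I}({\cal A}[\alpha])$ on the number of ebits and hence the optimality of $\Q^{(k)}$.
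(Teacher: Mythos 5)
Your route is genuinely different from the paper's. The paper's proof of Proposition~\ref{prop2} never determines the structure of the local unitaries at all: it simply takes the equal-modulus normal form guaranteed by Proposition~\ref{prop1}, writes Bob's reduced state $\rho_{Bb}$ for the input $\ket{00}_{AB}$, compares its off-diagonal element $\rho_{Bb}^{(12)}$ with the corresponding element $\rho'^{(12)}_{Bb}=\tfrac18\sin 2\alpha\,(e^{i(\theta_{31}-\theta_{42})}+e^{i(\theta_{43}-\theta_{34})})$ obtained when Alice flips her qubit, and invokes Corollary~\ref{nsp}. Equating, squaring and adding real and imaginary parts gives $2=\sin^2 2\alpha\,(1+\cos(\cdot))\le 2\sin^2 2\alpha$, which forces $\alpha=\pi/4$ in about four lines. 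You instead pin down the operators themselves: block-diagonality of $V_{Aa}$ in $A$, the CNOT-plus-controlled-unitary form of $U_{Bb}$ from the state-$1$/$2$ supports, and then algebraic constraints on the entries of $\Theta$ from the state-$3$/$4$ supports. That structural information is genuinely more than the paper extracts here (it is closer in spirit to what the paper only needs later, in the induction for general $k$), and your closing observation --- that one ebit at best converts ${\cal A}[\alpha]$ into something locally equivalent to ${\cal A}[2\alpha]$ --- is an attractive way to see the whole lower bound at once. The price is that your first paragraph carries essentially all the weight and is the part you leave as a sketch, whereas the paper's non-signalling argument needs none of it.

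Two concrete corrections to your algebra. First, the constraints that actually fall out of the support conditions are orthogonality relations with conjugates and a different index pattern than you display: e.g.\ the $B'{=}0$ components of state $3$ for $a'=0$ and $a'=1$ must be proportional to orthogonal computational-basis vectors, giving
\be
\cos^2\!\alpha\;\theta_{00}^{*}\theta_{10}+\sin^2\!\alpha\;\theta_{01}^{*}\theta_{11}=0,
\qquad
\sin^2\!\alpha\;\theta_{00}^{*}\theta_{10}+\cos^2\!\alpha\;\theta_{01}^{*}\theta_{11}=0 ,
\ee
not the unconjugated $\theta_{00}\theta_{11}$, $\theta_{01}\theta_{10}$ combinations you wrote. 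Second, subtracting these gives $\cos 2\alpha\,(\theta_{00}^{*}\theta_{10}-\theta_{01}^{*}\theta_{11})=0$, and the nonvanishing of the second factor is \emph{not} supplied by $\det\Theta\neq 0$; it follows from Proposition~\ref{prop1}, which forces all $|\theta_{ij}|$ to be equal and nonzero (indeed, row-orthonormality of $\Theta$ already gives $\theta_{00}^{*}\theta_{10}+\theta_{01}^{*}\theta_{11}=0$, so a single one of the two relations above suffices once the entries are known to be nonzero). With these repairs your argument does close, but as written the displayed determinant identities are not the ones the computation produces, and the appeal to $\det\Theta\neq0$ does not cover the degenerate subcases you claim it does.
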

\begin{proof}
Assume again that Alice and Bob are given $\ket{00}_{AB}$ and compare the reduced matrix on Bob's side with the case when Alice flips her spin. By Proposition~\ref{prop1}
\begin{equation}
\rho_{Bb}=\frac{1}{4}\left(\begin{array}{cccc}1&e^{i(\theta_{11}-\theta_{12})}&0&0\\
e^{-i(\theta_{11}-\theta_{12})}&1&0&0\\
0&0&1&e^{i(\theta_{13}-\theta_{14})}
\\0&0&e^{-i(\theta_{13}-\theta_{14})}&1\end{array}\right).
\end{equation}
Now, in case Alice flips her qubit consider an off-diagonal element

\begin{equation}
\rho'^{(12)}_{Bb}=\frac{1}{8}\sin2\alpha(e^{i(\theta_{31}-\theta_{42})}+e^{i(\theta_{43}-\theta_{34})})
\end{equation}
of the new density matrix $\rho'_{Bb}$. By Corollary~\ref{nsp} it must be equal to $\rho^{(12)}_{Bb}$, i.e.

\begin{equation}
2e^{i(\theta_{11}-\theta_{12})}=\sin2\alpha(e^{i(\theta_{31}-\theta_{42})}+e^{i(\theta_{43}-\theta_{34})}).
\end{equation}
Equating real and imaginary components yields

\begin{equation}
\begin{cases}
\begin{split}
2\cos(\theta_{11}-\theta_{12})&=\sin 2\alpha \left(\cos(\theta_{31}-\theta_{42})+\cos(\theta_{43}-\theta_{34})\right),\\
2\sin(\theta_{11}-\theta_{12})&=\sin 2\alpha \left(\sin(\theta_{31}-\theta_{42})+\sin(\theta_{43}-\theta_{34})\right).
\end{split}
\end{cases}
\end{equation}
After adding the squared equations we obtain

\begin{equation}
2=\sin^2 2\alpha \left(1+\cos(\theta_{31}-\theta_{42}-\theta_{43}+\theta_{34})\right).
\end{equation}

The maximum of the RHS is $2\sin^2 2\alpha$, therefore the equality can be only achieved for $\alpha=\pi/4$.
\end{proof}

Although we believe that the above proof technique could be used to prove optimality of $\Q^{(k)}[\pi/2^{k+1}]$ for any value of $k>2$, it becomes computationally intractable.

\subsection{Proof for arbitrary $k$}
\begin{thm}
For $\alpha$ such that $|\bar\alpha|_2 = k$ consider the protocol $\Q^{(k)}[\alpha]$ (described in the earlier sections), which steers ${\cal A}^{(k)}[\alpha]$ to ${\cal C}^{(k)}[\alpha]$.  Then $\Q^{(k)}[\alpha]$ is an optimal protocol. 
\end{thm}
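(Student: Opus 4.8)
The plan is to prove the two halves separately. Sufficiency is immediate: the protocol $\Q^{(k)}[\alpha]$ of Section~\ref{genstsect} steers $\mathcal{A}^{(k)}[\alpha]\to\mathcal B$ using exactly $k$ ebits, so everything is in the lower bound --- that no protocol can accomplish the task with fewer than $k=|\bar\alpha|_2$ ebits. I would prove the lower bound by induction on $k$, the base cases $k=1$ and $k=2$ being Theorem~\ref{onequbitthm} and Proposition~\ref{prop2}. By the padding remark following~\eqref{unitaryprot} (append extra pairs and act on them by $Z$), it is enough to rule out a protocol using exactly $m=k-1$ ebits.

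The heart of the induction is a peeling reduction. Suppose, for contradiction, that $\PP^{(k-1)}[\alpha]$ steers $\mathcal{A}^{(k-1)}[\alpha]\to\mathcal B$ with $|\bar\alpha|_2=k$. Write it, as in the Observation, as $\PP^{(k-1)}_0[\alpha]\oplus\PP^{(k-1)}_1[\alpha]$ on the blocks $\{\ket{00},\ket{01}\}$ and $\{\ket{1+_\alpha},\ket{1-_\alpha}\}$, and focus on a single resource pair, say $\ket{\Phi^+}_{a_1b_1}$. I would first argue a \emph{normal form}: up to local unitaries on $Aa_1$ and on $Bb_1$, the action of any minimal protocol on that pair may be taken to be one round of the iteration of Section~\ref{iteration} --- Bob's C-NOT $U^{CNOT}_{b_1B}$, a $\sigma_x$-readout of $b_1$, and Alice's controlled $R^x_{a_1}(\alpha_j)$. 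The justification is the non-signalling Corollary~\ref{nsp} together with the constraints (1)--(4): these are precisely the relations exploited in Propositions~\ref{prop1}--\ref{prop2}, used here only qualitatively, and reinforced by the bookkeeping that every state of $\mathcal{C}^{(m)}[\alpha]$ has Schmidt rank exactly $2^m$ across $A\tilde a_m\,|\,B\tilde b_m$ (the states of $\mathcal{A}[\alpha]$ are $A|B$-product and local unitaries preserve Schmidt rank); Proposition~\ref{master_set} guarantees this entails no loss of generality. Granting the normal form, conditioning on the $b_1$ readout leaves, on one branch, a state locally equivalent to $\mathcal B$ (on which the remainder of $\PP^{(k-1)}[\alpha]$ may act arbitrarily), and on the other branch the set $\mathcal{A}[2\alpha]$ together with the $k-2$ untouched pairs $\ket{\Phi^+}_{a_2b_2},\dots,\ket{\Phi^+}_{a_{k-1}b_{k-1}}$. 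The part of $\PP^{(k-1)}[\alpha]$ restricted to this second branch is then a legitimate protocol steering $\mathcal{A}^{(k-2)}[2\alpha]\to\mathcal B$; since $|\overline{2\alpha}|_2=k-1$, this contradicts the induction hypothesis. Hence $m\ge k$, and together with sufficiency we obtain optimality of $\Q^{(k)}[\alpha]$.

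I expect the normal-form step to be the real obstacle. Propositions~\ref{prop1} and~\ref{prop2} show exactly what non-signalling forces for one resource pair in isolation; the difficulty is to establish the analogous rigidity for one pair out of $k-1$ without the exhaustive casework that, as the paper notes, becomes intractable for $k>2$. The route I would take is to combine the Schmidt-rank count above with the non-signalling identities written for the single register $b_1$: the count fixes how much entanglement the first pair can carry into the output, and non-signalling then forces its ``spending'' to coincide with the canonical probabilistic rotation up to local unitaries. Once the reduction is secured, the extension to $\mathcal{A}[\alpha_1,\dots,\alpha_n]$ is routine, because the same peeling simultaneously drops the leading bit of every $\bar\alpha_i$, so the required number of pairs is $\max_i|\bar\alpha_i|_2$, matching $\mathcal I$.
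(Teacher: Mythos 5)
Your overall architecture (sufficiency from the explicit protocol, lower bound by induction with base cases $k=1,2$) matches the paper's, but your induction step takes a genuinely different route and, as written, has a gap that you yourself flag: the ``normal form'' claim --- that up to local unitaries the action of an arbitrary minimal protocol on one designated resource pair can be taken to be one round of the canonical iteration of Sec.~\ref{iteration} --- is the entire content of the lower bound, and nothing in your sketch establishes it. The Schmidt-rank bookkeeping is satisfied automatically by \emph{every} protocol of the form $V_{A\tilde a_m}\otimes U_{B\tilde b_m}$ acting on the product states of ${\cal A}[\alpha]$ plus $m$ ebits (local unitaries preserve the Schmidt rank $2^m$ across the $A\tilde a_m\,|\,B\tilde b_m$ cut), so it places no constraint on how the unitaries are organised internally and cannot single out the pair $a_1b_1$ or force it to be ``spent'' canonically. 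Likewise, the non-signalling relations of Propositions~\ref{prop1} and~\ref{prop2} were obtained by enumerating the full output structure for a single ebit; for $k-1$ pairs the analogous enumeration is exactly the casework the paper declares intractable, and you have not replaced it with an argument. A secondary problem: even granting a normal form, ``the part of $\PP^{(k-1)}[\alpha]$ restricted to the second branch'' must be shown to be a legitimate \emph{oblivious} LOSE protocol --- Alice cannot learn the $b_1$ outcome, so the residual unitaries must act identically on both branches on her side --- which is again a rigidity statement rather than bookkeeping.

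For contrast, the paper's induction avoids single-pair rigidity altogether. It normalises the hypothetical $(k-1)$-ebit protocol only on the block ${\cal A}_0=\{\ket{00},\ket{01}\}$, which is $\alpha$-independent, bringing it to the form $\left( \ketbra{0}{0}_A\otimes\mathbb{I}_{\tilde{a}_{k-1}}+\ketbra{1}{1}_A\otimes V_{\tilde{a}_{k-1}}\right)\otimes \Omega_{B\tilde{b}_{k-1}}$; it then pads with one redundant ebit so that the output can be compared, via local equivalence (Proposition~\ref{master_set}), with that of $\Q^{(k)}[\alpha]$; and it derives the contradiction from a spectral count on the subspace ${\cal H}_{\ketbra{1}{1}_A}\otimes{\cal H}_{\tilde a_k}$: the canonical protocol acts there as $\bigotimes_{i=1}^{k} R^x_{a_i}(2^{i}\alpha)$, which for suitable $\alpha$ has $2^k$ distinct eigenvalues, whereas $V_{\tilde a_{k-1}}\otimes\mathbb{I}_{a_k}$ has at most $2^{k-1}$. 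If you want to keep the peeling strategy you must actually prove the rigidity lemma; otherwise the eigenvalue-counting comparison is what closes the induction.
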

\begin{proof}
We prove by induction on $k$. We have shown earlier that the statement is true for $k=1,2$. Now, let us assume that $\Q^{(k-1)}[2\alpha]$ is optimal and prove that $\Q^{(k)}[\alpha]$ is optimal too. We prove by contradiction.
Assume that there exists a more economical protocol, $\PP^{(k-1)}[\alpha]$, which steers ${\cal A}^{(k-1)}[\alpha]$ to $\tilde{\cal C}^{(k-1)}[\alpha]$. The most general form of $\PP$ is
\begin{equation}\label{gen_prot}\begin{split}
\PP^{(m)}[\alpha]&= V_{A\tilde{a}_m}\otimes U_{B\tilde{b}_m}=\left(\ketbra{0}{0}_A\otimes V_{0,\tilde{a}_m}+\ketbra{1}{1}_A\otimes V_{1,\tilde{a}_m}\right )\otimes U_{B\tilde{b}_m}.
\end{split}
\end{equation}
where $\tilde{a}_m\equiv a_1\ldots a_m$, $\tilde{b}_m\equiv b_1\ldots b_m$ \footnote{In principle, $V_{A\tilde{a}_m}$ may be decomposed into a product of two unitaries: some general unitary $\tilde V_{A\tilde{a}_m}$ and a permutation unitary $T_{A\tilde{a}_m}$. The latter always exist for any $\tilde V_{A\tilde{a}_m}$ as it preserves orthogonality of bit strings in blocks ${\cal C}_0$ and ${\cal C}_1$. Thus, every unitary acting on $A\tilde{a}_m$ can be cast in the form \eqref{gen_prot}.}. The specific form of $V_{A\tilde{a}_m}$ follows from the fact that it is determined by its action on the two orthogonal subspaces of the $2^{k}$-dimensional Hilbert space. In general, $V_{0,\tilde{a}_m}$, $V_{1,\tilde{a}_m}$ and $U_{B\tilde{b}_m}$ depend on $\alpha$, while the known optimal protocol for ${\cal A}^{(k-1)}[2\alpha]$ which utilises $k-1$ ebits, $\Q^{(k-1)}[2\alpha]$, has $V_{{0,\tilde{a}_{k-1}}}=\mathbb{I}_{\tilde{a}_{k-1}}$, $V_{1,\tilde{a}_{k-1}}=\bigotimes_{i=1}^{k-1}R^x_{a_i}(2^{i}\alpha)$, $U_{B\tilde{b}_{k-1}}=\left[\prod_{i=2}^{k-1}\left((1-P_1^{(i-1)})\otimes \mathbb{I}_{Bb_i}+P_1^{(i-1)}\otimes\Omega_{Bb_i}\right)\right ]\Omega_{Bb_1}\equiv \Omega_{B\tilde{b}_{k-1}}$.\\

\noindent {\it Step 1: we reduce \eqref{gen_prot} to a special form \eqref{gen_prot1} below.}\\
\noindent Let us analyse how $\PP^{(k-1)}[\alpha]$ acts on ${\cal A}_0^{(k-1)}[\alpha]$ (recall that it is independent of $\alpha$). It is helpful to rewrite \eqref{gen_prot} as

\begin{equation}\label{gen_prot2}\begin{split}
\PP^{(k-1)}[\alpha]&= \left(\mathbb{I}_A\otimes V_{0,\tilde{a}_{k-1}}\otimes U_{B\tilde{b}_{k-1}}\Omega_{B\tilde{b}_{k-1}}^{\dagger}\right)\left( \ketbra{0}{0}_A\otimes\mathbb{I}_{\tilde{a}_{k-1}}\otimes \Omega_{B\tilde{b}_{k-1}}+\ketbra{1}{1}_A\otimes V_{0,\tilde{a}_{k-1}}^{-1} V_{1,\tilde{a}_{k-1}}\otimes \Omega_{B\tilde{b}_{k-1}} \right).
\end{split}
\end{equation}

\noindent Now
\begin{equation}
\left(\ketbra{0}{0}_A\otimes\mathbb{I}_{\tilde{a}_{k-1}}\otimes \Omega_{B\tilde{b}_{k-1}}\right){\cal A}_0^{(k-1)}[\alpha]={\cal C}_0^{(k-1)}[2\alpha],
\end{equation}
where ${\cal C}_0^{(k-1)}[2\alpha]$ is the set achieved by the standard protocol $\Q^{(k-1)}[2\alpha]$ implemented on ${\cal A}_0^{(k-1)}[2\alpha]$. This is true because ${\cal A}_0^{(k-1)}[\alpha]={\cal A}_0^{(k-1)}[2\alpha]$. But we also know from \eqref{gen_prot2} that
\begin{equation}
\PP^{(k-1)}[\alpha]{\cal A}_0^{(k-1)}[\alpha]=\tilde{\cal C}_0^{(k-1)}[\alpha]\neq{\cal C}_0^{(k-1)}[2\alpha],
\end{equation}
due to action of the first term in the brackets on the RHS, where

\begin{equation}
\tilde{\cal C}_0^{(k-1)}[\alpha]=\left(\mathbb{I}_A\otimes V_{0,\tilde{a}_{k-1}}\otimes U_{B\tilde{b}_{k-1}}\Omega_{B\tilde{b}_{k-1}}^{-1}\right){\cal C}_0^{(k-1)}[2\alpha].
\end{equation}
Also, $\tilde{\cal C}^{(k-1)}[\alpha]$ is still a set which satisfies standard conditions. Hence we observe that $\mathbb{I}_A\otimes V_{0,\tilde{a}_{k-1}}\otimes U_{B\tilde{b}_{k-1}}\Omega_{B\tilde{b}_{k-1}}^{-1}$ only reshuffles the
product basis states and, perhaps, introduces phases. We conclude that, there exists a protocol which acts on ${\cal A}_0^{(k-1)}[\alpha]$ as the standard optimal protocol $\Q^{(k-1)}[2\alpha]$.   In other words, the hypothetical protocol $\PP^{(k-1)}[\alpha]$ can be assumed to be acting on ${\cal A}_0^{(k-1)}[\alpha]$ in the same way as $\Q^{(k-1)}[2\alpha]$ does on ${\cal A}_0^{(k-1)}[2\alpha]$. in other words, we can assume that

\begin{equation}\label{gen_prot1}\begin{split}
\PP^{(k-1)}[\alpha]&= \left( \ketbra{0}{0}_A\otimes\mathbb{I}_{\tilde{a}_{k-1}}+\ketbra{1}{1}_A\otimes V_{\tilde{a}_{k-1}}\right)\otimes \Omega_{B\tilde{b}_{k-1}},
\end{split}
\end{equation}
where $V_{\tilde{a}_{k-1}}\equiv V_{0,\tilde{a}_{k-1}}^{-1} V_{1,\tilde{a}_{k-1}}$. It would have been identical to $\Q^{(k-1)}[2\alpha]$ if $V_{0,\tilde{a}_{k-1}}^{-1} V_{1,\tilde{a}_{k-1}}=\bigotimes_{i=1}^{k-1}R^x_{a_i}(2^{i}\alpha)$. Clearly, this cannot be the case.\\

\noindent {\it Step 2: We extend the protocol to make its action comparable to that of $\Q^{(k)}[\alpha]$.}\\
\noindent Let us return to the original protocol $\Q^{(k)}[\alpha]$, which gives ${\cal C}^{(k)}[\alpha]={\cal C}_0^{(k)}\oplus{\cal C}_1^{(k)}[\alpha]$. Now consider extension of $\PP^{(k-1)}[\alpha]$, with an additional ebit
as follows:
\begin{equation}
\mathbb{I}_A\otimes\mathbb{I}_{a_k}\otimes\left((1-P_1^{(k-1)})\otimes \mathbb{I}_{Bb_k}+P_1^{(k-1)}\otimes\Omega_{Bb_k}\right)\PP^{(k-1)}[\alpha]\equiv \tilde{\PP}^{(k)}[\alpha].
\end{equation}

\noindent This extended protocol, albeit consuming a reduntant ebit, still accomplishes the task, i.e. 

 \begin{equation}
\tilde{\PP}^{(k)}[\alpha]{\cal A}^{(k)}[\alpha]={\tilde{\cal C}}^{(k)}[\alpha]={\cal C}_0^{(k)}\oplus\tilde{\cal C}_1^{(k)}[\alpha],
\end{equation}
and can now be compared with the action of $\Q^{(k)}[\alpha]$,
\begin{equation}
\Q^{(k)}[\alpha]{\cal A}^{(k)}[\alpha]={\cal C}^{(k)}[\alpha]={\cal C}_0^{(k)}\oplus{\cal C}_1^{(k)}[\alpha],
\end{equation}

\noindent where ${\cal C}^{(k)}[\alpha]$ and $\tilde{\cal C}^{(k)}[\alpha]$ are locally equivalent by Proposition \ref{master_set} hence there exists a permutation of the basis vectors which reduces $\cal Q$ to $\tilde {\cal P}$ .\\

\noindent {\it Step 3: We show that $\cal Q$ is not reducible to $\tilde {\cal P}$.}\\
We can represent $\cal Q$ and $\tilde {\cal P}$ as follows:

\begin{align}
\Q^{(k)}[\alpha]&= \left( \ketbra{0}{0}_A\otimes\mathbb{I}_{\tilde a_k}+\ketbra{1}{1}_A\otimes_{i=1}^k R^x_{a_i}[2^{i}\alpha]\right)\otimes \Omega_{Bb}\\
\tilde\PP^{(k)}[\alpha]&= \left( \ketbra{0}{0}_A\otimes\mathbb{I}_{\tilde a_k}+\ketbra{1}{1}_A\otimes V_{\tilde a_{k-1}}\otimes\mathbb{I}_{a_k}\right)\otimes \Omega_{Bb}.
\end{align}

Consider the ancillary subspace ${\cal H}_{|1\rangle\langle 1|}\otimes{\cal H}_{\tilde a_k}$. On this subspace $\cal Q$ acts as a tensor product of $k$ single-qubit rotations $\otimes_{i=1}^{k} {R}(2^{i}\alpha)$, and $\cal P$ acts as $ V_{\tilde a_{k-1}}\otimes \mathbb{I}_{a_k}$. There exists $\alpha$ for which the spectrum of the total rotation consists of $2^k$ different eigenvalues. The spectrum of the unitary $V_{\tilde a_{k-1}}\otimes\mathbb{I}_{a_k}$ consists of at most $2^{k-1}$ different eigenvalues. Thus, the latter protocol cannot be reduced to the former which contradicts our initial assumption.
%
%
%...which implies ${\cal C}_1^{(k)}[\alpha]=\tilde{\cal C}_1^{(k)}[\alpha]$, i.e. ${\cal C}^{(k)}[\alpha]=\tilde{\cal C}^{(k)}[\alpha]$. This can only be a result of the two protocols being identical, which leads us to a contradiction as they are clearly not the same by construction.
\end{proof}

\section{A (Possible) Generalization of the steering sets}

Here we give two examples of sets of entangled states with distinguishability properties equivalent to the sets of product states studied earlier.
\begin{examp}
The following set of two orthogonal entangled states is equivalent to ${\cal A}[\alpha_1,...,\alpha_n]$ \footnote{It is known that any two pure orthogonal states, entangled or product, can be perfectly distinguished by LOCC~\cite{walgate_local_2000}}.

\begin{equation}\label{ent_set}
 \begin{split}
  \ket{\psi}&=c_0\ket{0}_A\ket{0}_B+c_1\ket{1}_A\ket{+_{\alpha_1}}_B+...+c_n\ket{n}_A\ket{+_{\alpha_n}}_B\\
\ket{\phi}&=d_0\ket{0}_A\ket{1}_B+d_1\ket{1}_A\ket{-_{\alpha_1}}_B+...+d_n\ket{n}_A\ket{-_{\alpha_n}}_B
 \end{split}
\end{equation}
\end{examp}

\noindent Clearly, a local projection measurement in the basis $\{\ket{i}_A\}_{i=1}^{n}$ by Alice maps this set to ${\cal A}[\alpha_1,...,\alpha_n]$.
Hence, its distinguishability is necessary and
sufficient for distinguishability of~\eqref{set1}.

\begin{examp}
The set of four entangled states
\be\label{belllike}{\cal D}[\alpha]:=\{|\Psi_i\ra\}_{i=1}^4,
\ee
 where
\begin{align*}
&|\Psi_1\ra = \cos\alpha|00\ra+\sin\alpha|11\ra,\\
&|\Psi_2\ra=\sin\alpha|00\ra-\cos\alpha|11\ra,\\
&|\Psi_3\ra=\cos\alpha|01\ra+\sin\alpha|10\ra,\\
&|\Psi_4\ra=\sin\alpha|01\ra-\cos\alpha|10\ra.
\end{align*}
\end{examp}

Let us show that ${\cal D}\rightarrow \cal B$ requires the same amount of entanglement resources as $\cal A\rightarrow \cal B$. First stage is to steer ${\cal D}$ to ${\cal A}[2\alpha]$. The protocol which achieves such a map is a modification of the procedure invented in \cite{IMNLV_GR}. Alice and Bob start with implementing instantaneous nonlocal C-NOT
with $B$ and $A$ being a control and a target respectively (as a part of this step they obtain corresponding results of local measurements -- $x_a=\pm 1$ and
$z_b=\pm 1$). This is followed by Bob rotating his qubit $B$ by $\alpha$ about the $y$ axis.
The resulting mapping is summarised in Table \ref{table:D}.

\begin{table}[h!]
\begin{center}
\begin{tabular}{cclcl}\hline\label{ab}
$x_a \backslash
z_b$&~&$z_b=+1$&~&$z_b=-1$\\
\hline\hline\noalign{\smallskip}
$x_a=+1$&~&$\ket{0}_A\ket{+_{2\alpha}}_B$&~&$\ket{1}_A\ket{+_{2\alpha}}_B$\\
&~&$\ket{0}_A\ket{-_{2\alpha}}_B$&~&$\ket{1}_A\ket{-_{2\alpha}}_B$\\
&~&$\ket{1}_A\ket{+_{2\alpha}}_B$&~&$\ket{0}_A\ket{+_{2\alpha}}_B$\\
&~&$\ket{1}_A\ket{-_{2\alpha}}_B$&~&$\ket{0}_A\ket{-_{2\alpha}}_B$\\
\noalign{\smallskip}\hline\noalign{\smallskip}
$x_a=-1$&~&$\ket{0}_A\ket{0}_B$&~&$\ket{1}_A\ket{0}_B$\\
&~&$\ket{0}_A\ket{1}_B$&~&$\ket{1}_A\ket{1}_B$\\
&~&$\ket{1}_A\ket{0}_B$&~&$\ket{0}_A\ket{0}_B$\\
&~&$\ket{1}_A\ket{1}_B$&~&$\ket{0}_A\ket{1}_B$\\
\noalign{\smallskip}\hline

\end{tabular}
\end{center}
\caption{The resulting map after first stage.}\label{table:D}
\end{table}

Here the states of $A$ and $b$ are in direct product with the set
\begin{equation}\begin{split}
&\ket{-}_a\ket{0}_B\\
&\ket{-}_a\ket{1}_B\\
&\ket{+}_a\ket{+_{2\alpha}}_B\\
&\ket{+}_a\ket{-_{2\alpha}}_B,
\end{split}
\end{equation}
which is equivalent to ${\cal A}[2\alpha]$ and requires ${\cal I}(\alpha)-1$ ebits to steer it to $\cal B$ on the second stage, bringing the total cost to ${\cal I}(\alpha)$.

\section{discussion}
We have considered the question of distinguishability of sets of mutually orthogonal states under LOSE. Our main objects of interest were specific families of sets of bipartite product states ~\eqref{set1} characterised by $n$ real parameters $\{\alpha_i\}_{i=1}^n$, with ~\eqref{twisted_product} being a simplest case of a single parameter $\alpha$. The task had been represented as a steering game and we have provided an algorithm which achieves the final state of the game, i.e. a set composed of tensor-product local bases states, using optimal amount of entanglement. For the steering game we defined a figure of merit  $\cal I$ which quantifies the exact amount of entanglement required to steer the initial set to the given one and hence can be regarded as a measure of information nonlocality in the set.
We have shown that it is a maximal length of binary expansions of $\alpha_i$.

We have provided examples of sets of entangled states, which can be steered using the same number of ebits as the sets of product states with the
same values of corresponding parameters. In order to show that the type of nonlocality that we study is not associated with entanglement of a set, but depends solely on the values of parameters characterising them, let us consider particular mixtures of the states.
\begin{examp} Alice and Bob are given a state drawn from ${\cal D}[\alpha]$ ~\eqref{belllike} with a probability distribution such that the resulting mixed state is
\be
\rho_W = \frac{1-F}{3}\left(\Psi_1+\Psi_2+\Psi_3\right)+F\Psi_4=\frac{1-F}{3}{\mathbb{I}}+ \frac{4F-1}{3}\Psi_4,
\ee
where $0\leq F\leq 1$. It is known that for $\alpha=\pi/4$ the mixed state $\rho_W$ is not entangled when $0\le F\le\half$ and entangled when $\half < F \le 1$ ~\cite{werner_quantum_1989} with $\Psi = \ketbra{\Psi}{\Psi}$ and $\Psi_1 = \Phi^+, \Psi_2 = \Phi^-, \Psi_3 = \Psi^+, \Psi_4 = \Psi^-$ -- the four Bell states. However, as was shown earlier, the steering properties of $\rho_W$ depends only on the value of $\alpha$, but not on $F$.
\end{examp}

Let us discuss our results in relation to past works. The question of designing more efficient protocols had been addressed in ~\cite{clark_entanglement_2010}. The distinguishing feature of their proposal is that the protocol halts with increasing probability as the number of rounds grows larger. Thereby, it does not consume the remaining entanglement resources.  However, there is non-zero probability that the protocol will consume all available entanglement.  Thus, the protocol had been designed to minimise the expected number of ebits consumed. Our goal is different: we aim at minimizing the amount of entanglement that has to be available beforehand.

Two correlation measures related to our scenario --- {\it quantum deficit} and {\it classical zero-way deficit} --- have been introduced in ~\cite{horodecki_local_2005}.

We note that they are unsuitable measures of the nonlocality of ${\cal A}[\alpha]$: quantum deficit may be made large or small depending on $\alpha$, and independent of the binary expansion of the latter. Classical zero-way deficit depends on the probability distribution of the states drawn from $\cal A[\alpha]$
in~\eqref{twisted_product}, whereas the amount  of nonlocality of the given set as quantified by a steering task is not --- it depends only on $\alpha$.

Our results are closely related to the definition of quantumness of sets of quantum states proposed in~\cite{groisman_quantumness_2007}. The sets considered in our work fall into the category of nonclassical (quantum) sets, and the presence of de-localised information can be interpreted as a signature of quantumness. However, the measure of information nonlocality proposed by us is profoundly different from the {\it relative entropy of quantumness} proposed in ~\cite{groisman_quantumness_2007}. The latter, being a distance-measure, depends on the probability distribution of the members of the set. It is important to note that information nonlocality captures the information-theoretic aspect of sets in terms of the length of binary representation of corresponding paremeter(s) and as such cannot serve as an alternative measure of quantumness in a distance sense --- two very close sets can have very different values of information nonlocality.

Our work raises a number of questions. First, what are the most general orthogonal sets which one can steer to $\cal B$ and how does our measure $\cal I$ generalise to these sets? Second, what are the applications of the steering algorithm to other tasks of quantum information processing? Third, whether one can relate the number of ebits needed to bi-localise information to the cost of resetting delocalised information in the sense of Landauer's principle. Finally, what is the meaningful generalization of the steering procedure to the multipartite states for three parties and more?
\section*{Acknowledgments}
The authors would like to thank Sidney Sussex College, Cambridge for financial support.

\appendix*
\section{the instantaneous rotations protocol -- the stator formalism}\label{appendix}
The instantaneous steering protocol of Sec.~\ref{genstsect} can be presented in the state-operator (stator) formalism which was first introduced in~\cite{reznik_remote_2002}. Steps 1 and 2 result in the preparation of the following stator:
\be
\ket{0}_{b_t}\otimes S_+^{(t)} + \ket{1}_{b_t}\otimes S_-^{(t)},
\ee
where $S_{\pm}^{(t)}=\ket{0}_{a_t}\otimes\mathbb{I}_B\pm\ket{1}_{a_t}\otimes\sigma_y^B$ and we omit normalization. Both $S_{\pm}^{(t)}$ satisfy eigen-operator equation $\sigma_x^{a}S_{\pm}^{(t)}=\pm \sigma_{y}^B S_{\pm}^{(t)}$; thus 
\be
\sigma_x^{a_t}\left(\ket{0}_{b_t}\otimes S_+^{(t)} + \ket{1}_{b_t}\otimes S_-^{(t)}\right)=\sigma_y^{B}\left(\ket{0}_{b_t}\otimes S_+^{(t)} - \ket{1}_{b_t}\otimes S_-^{(t)}\right),
\ee
which in turn implies that 
\be
R_a(\theta)\left(\ket{0}_{b_t}\otimes S_+^{(t)} + \ket{1}_{b_t}\otimes S_-^{(t)}\right)=\ket{0}_{b_t}\otimes R_B(\theta)S_+^{(t)}  + \ket{1}_{b_t}\otimes R_B(-\theta) S_-^{(t)}.
\ee
This ability to ``propagate" transformations from $a$ to $B$ lies in the heart of stator formalism.

The sequence of Bob's actions on $B$ and $k$ ebits $\ket{\Phi^+}^{\otimes k}$ [the unitary on the RHS of Eq.~\eqref{instrot}] results in what we call the ``superstator":
\be
S_{sup}=\sum_{t=1}^k \frac{1}{2^{t/2}} \ket{\Phi^+}_{a_kb_k}...\ket{\Phi^+}_{a_{t+1}b_{t+1}}\ket{0_{b_t} 1_{b_{t-1}}...1_{b_1}}S_+^{(t)}S_-^{(t-1)}...S_-^{(1)} + \Theta,
\ee
where $\Theta = 2^{-k/2}\ket{1_{b_k} 1_{b_{k-1}}...1_{b_1}}S_-^{(k)}S_-^{(k-1)}...S_-^{(1)}$.
For each term $S_{sup}$ the actions on Alice's side generate a desired rotation of $B$:
\begin{align}
&\left[\bigotimes_{i=1}^t R^x_{a_i}(2^{i-1}\alpha)\right]S_+^{(t)}S_-^{(t-1)}...S_-^{(1)}=R_B^y(\alpha)S_+^{(t)}S_-^{(t-1)}...S_-^{(1)},\\
&\left[\bigotimes_{i=1}^k R^x_{a_i}(2^{i-1}\alpha)\right]S_-^{(k)}S_-^{(k-1)}...S_-^{(1)}=R_B^y(-\pi/2+\alpha)S_-^{(k)}S_-^{(k-1)}...S_-^{(1)}.
\end{align}

\bibliographystyle{apsrev4-1}
%\bibliographystyle{aipauth4-1}
%\bibliography{statediscrim}

%

\end{document}